\setlist{nosep} \setlist[1]{labelindent=\parindent}
\newcommand{\cptp}[1]{\mathsf{#1}}
\newcommand{\Id}{\cptp{I}}
\newcommand{\X}{\cptp{X}}
\newcommand{\Z}{\cptp{Z}}
\newcommand{\Y}{\cptp{Y}}
\newcommand{\RZ}{\cptp R_{\Z}}
\newcommand{\CZ}{\mathsf{CZ}}
\newcommand{\CNOT}{\mathsf{CNOT}}
\newcommand{\Abort}{\mathsf{Abort}}
\newcommand{\BQP}{\mathsf{BQP}}
\newcommand{\acc}{\mathsf{Acc}}
\newtheorem{result}{Result}
\let\ifcomment\iffalse 
  \newcommand{\mn}[2]{\textsuperscript{\textcolor{red}{\textsf{\textbf{#1}}}}\marginnote{\tiny\textcolor{red} {\textsf{\textbf{#1:} #2}}}}
  \newcommand{\pn}[2]{\textcolor{red}{\textsf{\textbf{#1:} {#2}}}}
  \newcommand{\mn}[2]{}
  \newcommand{\pn}[2]{}
\let\ifintro\iftrue  
\begin{document}

\date{\today}
\title{Verification of Quantum Computations \newline without Trusted Preparations or Measurements}
\author{
Elham Kashefi\inst{1,2} \and
Dominik Leichtle\inst{2} \and
Luka Music\inst{3} \and
Harold Ollivier\inst{4}
}
\institute{
School of Informatics, University of Edinburgh, 10 Crichton Street, Edinburgh EH8 9AB, United Kingdom \and
Laboratoire d’Informatique de Paris 6, CNRS, Sorbonne Université, 4 Place Jussieu, 75005 Paris, France \and
Quandela, 7 Rue Léonard de Vinci, 91300 Massy, France \and
DI-ENS, Ecole Normale Supérieure, PSL Research University, CNRS, INRIA, 45 rue d'Ulm, 75005 Paris, France
}
 
\maketitle

\begin{abstract}
  With the advent of delegated quantum computing as a service, verifying quantum computations is becoming a question of great importance.  Existing information theoretically Secure Delegated Quantum Computing (SDQC) protocols require the client to possess the ability to perform either trusted state preparations or measurements. Whether it is possible to verify universal quantum computations with information-theoretic security without trusted preparations or measurements was an open question so far.  In this paper, we settle this question in the affirmative by presenting a modular, composable, and efficient way to turn known verification schemes into protocols that rely only on trusted gates.
  
  Our first contribution is an extremely lightweight reduction of the problem of quantum verification for $\BQP$ to the trusted application of single-qubit rotations around the $\Z$ axis and bit flips.
  The second construction presented in this work shows that it is generally possible to information-theoretically verify arbitrary quantum computations with quantum output without trusted preparations or measurements. However, this second protocol requires the verifier to perform multi-qubit gates on a register whose size is independent of the size of the delegated computation.
  
  \keywords{Quantum Verification, Delegated Computation, Distributed Quantum Computing.}
\end{abstract}

\setcounter{tocdepth}{3}

\section{Introduction}\label{sec:intro}

\subsection{Context and Motivation}

When quantum computers will have superseded classical machines for practical problems, ensuring that the results that they provide can be trusted will become a challenge. For some problems, the only strategies to verify the computation outcome after-the-fact are equivalent to performing the full classical simulation of the computation~\cite{BISB18characterizing,AABB19quantum}, thereby voiding the usefulness for quantum computers. This cannot be avoided, unless some widely believed complexity-theoretic conjecture -- such as the absence of collapse of the polynomial hierarchy -- is disproved.

Current trends in quantum computing point towards a future where most quantum computations are delegated to powerful quantum servers rather than executed by end-users with on-premise hardware. Delegation adds another layer of difficulty to the aforementioned testing challenge since it is not possible to observe \emph{in situ} the operations performed by the quantum computer. As a result, some of the assumptions which are reasonable in the non-delegated setting become hard to justify -- e.g.~the i.i.d. nature and non-maliciousness of the hardware operators actions. Therefore, tests which rely on these assumptions become immediately less convincing.

\begin{quote}
\itshape
What are then the available options for clients of QaaS companies which seek to ensure the integrity of their delegated computation?
\end{quote}

From a philosophical perspective, this question amounts to finding the minimal assumptions that are required to falsify quantum mechanics. The field of self-testing~\cite{MY03self} (see~\cite{SB20self} for a review) aims to answer this problem by relying on various Bell tests in which parties interact only classically with machines whose internal workings are unknown. If the tests pass, then depending on the specific scenario, it is possible to guarantee that the machine has done something which cannot be explained by local classical operations, and sometimes even that the system which the machines operate upon has a given quantum structure. These settings are highly adversarial and in the worst case no trust is placed in any of the machines. Unfortunately, most of the protocols resulting from these ideas have a very low throughput due to the necessity to enforce the absence of communication between some parties involved in the protocol, which in turn imposes a strict serialisation and the additional assumption of the space-like separation of the quantum devices. As such, state-of-the-art implementations are impractical even for the simplest task of key distribution~\cite{ZLRG22device}. Applying these techniques to testing delegated computations has been described~\cite{M16interactive} but the corresponding implementations would require quantum memories with extremely long storage times while throughput issues would further push back significantly the regime where quantum computers outperform classical ones.

This means that in any practically relevant settings, clients will need to make some assumptions if they want to verify their delegated computation. This translates into a trade-off between the requirements on the client and those on the server. On one end of the spectrum, clients can be fully classical if they are willing to rely of complexity assumptions such as the quantum hardness of LWE~\cite{M18classical}. In that case, they can use classical cryptographic schemes to run their computation of interest in a way that hides enough information to the server to make it possible to insert tests aimed at checking the honesty of the server alongside the computation. These schemes all require the server to implement complex unitary transformations on a quantum register whose size increases with the security parameter. This by itself requires thousands of error-corrected qubits and fault-tolerant gates, a feat which is out of reach of current and near-term hardware. On the other end, clients can operate and trust a small quantum device, either for generating or measuring single qubits in a small set of bases. It has been shown that this is sufficient to test arbitrary quantum computations with an exponentially-low probability to have the client accept a corrupted computation~\cite{FK17unconditionally,KW17optimised}. In the case of $\BQP$ computations, this can be done with no overhead on the server's side~\cite{LMKO21verifying}. Yet, the physical implementation still requires single-photon sources or detectors -- expensive and complex equipment pieces that need specialised teams to handle and service correctly.

In this paper, we show how to ensure the integrity of a delegated computation while reducing the need to trust complex devices on the client's side to the point where their physical implementations are both cheap and easy to maintain. This is done without degrading the cryptographic guarantees provided by the protocols nor drastically increasing the requirements on the server's side, both in terms of time or computation overhead.

\subsection{Our Contributions}

We focus our efforts on the client's operations required to implement Secure Delegated Quantum Computation protocols such as~\cite{FK17unconditionally,KW17optimised} belonging to the broader class of protocols described in~\cite{KKLM22unifying}. Their security relies on the ability of the client to implement a functionality called Remote State Preparation (RSP) that amounts to generating states from a given ensemble, e.g.~$\{(\ket{0} + e^{i k\pi/4}\ket 1)/\sqrt 2 \mid k = 0,\ldots 7\}$, on the server's side while not leaking additional information about which state has been prepared. This is usually performed by preparing those states on the client's side and using a quantum channel to transmit them to the server.

These protocols' security, as proven in the composable Abstract Cryptography (AC) framework~\cite{MR11abstract-cryptography,M12constructive-cryptography}, can be reduced to this specific subroutine. This allows us to focus on RSP independently of the rest of the protocol.
We show that RSP can be implemented securely without trusting the qubit preparation, if the client has access to trusted unitary transformations. The transformed states are sent to the server who performs the rest of the protocol with only classical interactions with the client.

Two aspects of such constructions make them particularly challenging: (i) the protocol replacing the trusted preparation must leak no information about the secretly prepared state to the server, and (ii) the protocol must not give the server the opportunity to influence the produced state in a way which depends on its classical description. As opposed to previous works~\cite{MKAC22qenclave} which were just aiming for \emph{blindness} and thus needed to fulfill only the property (i) above, property (ii) is necessary for the \emph{verifiability} of the resulting resource, and the harder of the two to satisfy. It requires that any deviation of the server during the protocol can be mapped to a deviation applied after the state has been generated, while maintaining its secret-independence. More precisely, for all possible malicious operations $\cptp E$ performed by the server before a trusted unitary $\cptp U_i$ is applied from the set $\{\cptp U_i\}_i$, the resulting deviation $\cptp U_i \cptp E \cptp U_i^\dagger$ when $\cptp E$ is commuted through $\cptp U_i$ should not depend on $i$ -- which is in general not guaranteed.

The contributions of this paper are twofold, expressed informally in Results~\ref{res:1} and \ref{res:2}.

\begin{result}[Secure Implementation of Single-Plane RSP without Trusted Preparation (Theorem~\ref{thm:rsp-from-rrd}]
  \label{res:1}
  It is possible to securely implement Remote State Preparations of ensembles of states in a single plane of the Bloch sphere using only trusted rotations in this plane and bit flips (Resource~\ref{res:rrd}).
\end{result}

Protocol~\ref{prot:rsp-from-rrd}, which yields the result above, works as follows. In the honest case, the client receives a $\ket{+}$ state from the server, applies a random bit-flip followed by a rotation by an angle $\theta$ around the $\Z$ axis of the Bloch sphere. It then sends back the state to the server. We show that this entirely replaces the trusted preparation of a $\Z(\theta)\ket{+}$ state even if the malicious server does not in fact send a $\ket{+}$ state. There is no degradation of the security error compared to using trusted preparations of states in the $\X - \Y$ plane.

Concerning the feasibility of this protocol, in the dual-rail encoding used on photonic chips, these operations can be implemented with a low number of simple components such as phase-shifters and beam-splitters. When combined with the tests from~\cite{KKLM23asymmetric} which only uses these kinds of states, we construct an efficient protocol for testing the correctness of $\BQP$ computations with only trusted rotations and bit-flips.

\begin{result}[Secure Implementation of Arbitrary RSP without Trusted Preparation (Theorem~\ref{thm:rsp-from-crsp-and-ru})]
  \label{res:2}
  It is possible to securely implement Remote State Preparations of arbitrary ensembles of states using trusted single-qubit unitaries and $\CNOT$ gates, whose number depends linearly in the security parameter and logarithmically in the size of the computation.
\end{result}
The construction is done in two steps. We first show in Protocol~\ref{prot:crsp-from-mrc} how to replace sending a Pauli eigenstate $\{\ket{0}, \ket{1}, \ket{+}, \ket{-}, \ket{+_i}, \ket{-_i}\}$ and then use this procedure in Protocol~\ref{prot:rsp-from-crsp-and-ru} to send arbitrary pure states.

In the honest case, the first step requires the server to first send to the client $n$ copies of the $\ket{0}$ state. The client then applies a random dephasing operator $\Z^d$ to each one and then a random invertible linear transformation $f : \bin^n \rightarrow \bin^n$. This mixes all the computational basis states apart from the all-zero state. The client then transforms the first qubit into its desired Pauli eigenstate via the appropriate single-qubit Clifford and encrypts it with a Quantum One-Time Pad. Random bit-flips are applied to the other qubits. All $n$ qubits are then sent back to the server who is then asked to measure the last $n - 1$ qubits and send the correct value for the bit-flips. We show that the server can do this with non-negligible probability if and only if it had effectively sent the all-zero state at the very beginning.

The second step takes as input the Pauli eigenstate $\ket{\psi} = \cptp C\ket{0}$ generated above, has the client apply a random single-qubit unitary $\cptp U$ and send this state back to the server. The client then sends the classical description of the unitary $\cptp V \cptp C^\dagger \cptp U^\dagger$ to the server, who applies it to obtain the client's desired state $\cptp V \ket{0}$. This second step is perfectly secure since the classical description of $\cptp V \cptp C^\dagger \cptp U^\dagger$ does not leak any information about the state $\cptp V \ket{0}$ thanks to the presence of the random unitary $\cptp C^\dagger \cptp U^\dagger$.

Combining these two steps allows us to construct an RSP for both rotated states $\Z(\theta)\ket{+}$ and computational basis states using only trusted transformations. This is the set of states that is required by most existing protocols for testing delegated computations such as the ones cited above. Replacing the trusted preparation step in any of these protocols, provided they are composably secure, by our construction yields a protocol in which the client only needs to trust a single application of a single-qubit unitary and a linear number of $\CNOT$ gates. The overall security error after this transformation degrades only negligibly compared to the protocol with trusted preparations so long as the client's desired computation is polynomial in the security parameter. While the size of the client's register grows with the security parameter, the server's overhead is significantly lowered when compared to computationally secure protocols with fully-classical clients. In addition, the size of the client's register is independent of the size of the computation.

Results~\ref{res:1} and~\ref{res:2} hold against unbounded adversarial servers. To our knowledge, this is the first result which removes the requirement of trusted preparation or measurement in the context of secure delegated computations while also preserving the same security characteristics -- i.e.~information-theoretic security and at most negligible degradation of the security error. Additionally, our protocols are composably secure, meaning that they can replace trusted preparations in any context beyond secure delegated quantum computations while preserving the same security, without needing to prove the security of the full protocol anew.

In addition to these two main contributions, we show in Appendix~\ref{app:non-verif} that if the goal is to construct a weaker version of RSP for arbitrary states -- 
namely one in which the server can deviate selectively on certain states and not on others 
-- then it is possible to do so with solely trusted single-qubit unitary transformations.

We also show in Appendix~\ref{app:collaborative_remote_operations} how any resource that is defined earlier in the paper can be lifted to the multi-party setting in which multiple clients want to generate a state from a certain ensemble on a server without leaking information to any malicious coalition. This allows us to generalise all our results to the realm of Delegated Quantum Secure Multi-Party Computations in which multiple clients want to jointly delegate a computation to an untrusted server without revealing their inputs or outputs to each other.

\subsection{Related Work}\label{subsec:related-work}

Single-qubit rotations in a single plane of the Bloch sphere have been shown to be sufficient if the goal is to perform blind delegated computations~\cite{MKAC22qenclave}, i.e.~if preventing information leaks about the computation, the inputs nor the outputs to the server is the only concern. They raise the question of whether it is also sufficient for verification, which we answer here for $\BQP$ computations.

Ref.~\cite{PLLC23multi} presents an experimental implementation of a multi-client blind delegated computation protocol in which a number of clients want to delegate a common computation to the server without revealing their inputs and outputs to each other and without revealing the computation to the server. They also allow the clients to use single-qubit rotation in a single plane of the Bloch sphere so our techniques could be used directly to upgrade their protocol and experiment to full security by adding a verification procedure at no additional hardware cost or change in the experiment's setup. The only modification would be the classical instructions sent to the server and classical post-processing performed on the client's side.

\section{Preliminaries}

\subsection{Abstract Cryptography}\label{sec:ac}

The Abstract Cryptography (AC) security framework \cite{MR11abstract-cryptography,M12constructive-cryptography} used in this work follows the \emph{ideal/real simulation paradigm}. A protocol is considered secure if it is a good approximation of an ideal version called a \emph{resource}.  The approximation should hold whether the participants follow the roles given to them by the protocol, if they are honest, or else deviate from these instructions arbitrarily. Protocols can use multiple such resources as building blocks to construct a new resource.

The main advantage is that any protocol that is proven secure in this framework is inherently \emph{composable}, in the sense that if two protocols are secure separately, the framework guarantees at an abstract level that their sequential or parallel execution is also secure. Intuitively, if a protocol A $\epsilon$-approximates a resource R and protocol B calls resource R to approximate another resource S, then replacing the call to R in B by protocol A does not degrade the security of the composed protocol by more than $\epsilon$. For this reason, Abstract Cryptography is considered one of the most stringent security framework, as stand-alone security criteria do not necessarily grant the security of composite protocols even when each constituent is secure.

We give more details about the Abstract Cryptography Framework in Appendix~\ref{app:ac}. We refer the reader to \cite{DFPR14composable} for a more in-depth presentation, in particular regarding the framework's composability in the context of SDQC.

\subsection{Secure Delegated Quantum Computation Protocols}
\label{sec:sdqc}

SDQC protocols allow a weak Client to delegate a computation to a powerful quantum Server while making sure that the server learns nothing about the computation --- the protocol is blind --- and is forced to either perform it correctly or force an abort --- the protocol is verified. This means that the probability that information about the computation leaks or that the client accepts an incorrect outcome are both vanishingly low. We present here the basic ideas for constructing a few such protocols, give their security results in Appendix \ref{app:ac} and defer to the respective original papers for more details.
See~\cite{GKK19verification} for an overview of previous quantum verification protocols.

\paragraph{Measurement-Based Quantum Computing.}
Most protocols for SDQC where clients have access to limited quantum resources are framed in the Measurement-Based Quantum Computation (MBQC) model.  A large entangled state is first constructed and then adaptive single-qubit measurement are performed on this state to perform the computation.

The most common resource states are so-called graph states. To each vertex $v$ of a graph $G$ is associated a qubit in the $\ket{+}$ state, while each edge $e$ corresponds to a $\CZ$ gate applied to the qubits associated to the vertices linked by $e$.

The single-qubit measurements are all in the $\X - \Y$ plane of the Bloch sphere, and an angle $\phi$ defines the measurement basis $\ket{\pm_\phi}$. Approximate universality is obtained for $\phi \in \Theta = \{k\pi/4\}_{k=0}^7$. The order in which measurements are performed is important since the measurement bases of future measurements need to be adapted based on a subset of previous outcomes. The order and corrections are given by a function called the g-flow $g$, whose existence for a given graph guarantees that the computation is independent of the measurement outcomes. See \cite{BKMP07generalized} for more details.

MBQC computations are therefore fully determined by the graph $G = (V, E)$, the measurement angles $\{\phi_v\}_{v \in V}$ and the g-flow $g$. 

\paragraph{Universal Blind Quantum Computing.}
It is possible to perfectly hide the measurement angles by having the Client prepare one $\ket{+_{\theta_v}}$ state for each vertex $v$, for a random $\theta_v \in \Theta$, send them to the Server who will perform the same entanglement operation according to the graph $G$. Then, the computation is performed by having the Client send the measurement angle $\delta_v := \phi'_v + \theta_v + r_v\pi$ to the Server, who performs the measurement and returns the outcome $b_v$. The Client recovers the correct outcome $s_v = b_v + r_v$. Intuitively, we can see that the angle $\theta_v$ perfectly hides the real computation angle $\phi'_v$ -- where the apostrophe denotes the corrected angle -- while $r_v$ perfectly hides the measurement outcome. The AC security of the Universal Blind Quantum Computation (UBQC) protocol, which implements this strategy, shows that the only information that the Server has access to is the structure of the graph \cite{BFK09universal,DFPR14composable}. This takes care of the blindness property.

\paragraph{Verifying Classical Input/Output Quantum Computations.}
If the computation is in the complexity class $\BQP$, then repeating the computation multiple times and taking the majority outcome will yield the correct outcome with overwhelming probability. We can then interleave test rounds based on the graph state's stabilisers together with these computations rounds and hide which is which with a UBQC overlay. The blindness forces the Server to attack indiscriminately a sizable number of test and computation rounds, if it wants to be able to change the Client's output. Therefore, if not more than a constant fraction of test rounds fail, then we can also guarantee with overwhelming probability that not too many of the computation rounds fail. In such cases, the majority vote will recover the correct outcome. This works so long as test rounds detect with constant probability all of the adversary's attacks which would be harmful to the computation \cite{KKLM22unifying}.

Up until recently, performing these test rounds required the Client to prepare not only the $\ket{+_\theta}$ states used by the UBQC protocol, but also computational basis states. Recently, a new protocol~\cite{KKLM23asymmetric} was proposed which does away with this additional requirement by aptly choosing the stabilisers used in the tests so that these contain no $\Z$ component. A test is performed by having the Client choose a random stabiliser which satisfies this property; and then have it instruct the corresponding measurements to the Server. The parity of measurement outcomes for the stabiliser determine the outcome of the test. Since these measurements are drawn from the same distribution as the regular computation thanks to the UBQC protocol, the Server cannot distinguish test rounds from computations.

Ref.~\cite{KKLM23asymmetric} shows the AC security of the SDQC protocol which uses this technique. More precisely, it shows that the probability that the Client accepts a corrupted outcome is negligibly small in $N$,  the total number of rounds in the protocol.

\paragraph{Verifying Quantum Input/Output Computation.}
If the Client's input to the computation is quantum, it might not be possible to repeat the computation. The no-cloning theorem forbids to perfectly copy the input which then impairs the ability to run the computation several times. Furthermore, if the output is quantum, it is unclear how to recombine the outputs of multiple executions as with the classical majority vote. One way to circumvent these issues is to perform the tests and computation in one single execution.

We present here the construction of \cite{KW17garbled}. Starting from a graph $G = (V, E)$, we first construct the Triple Graph $T(G) = (T(V), T(E))$ by replacing every vertex $v \in V$ by a set of three vertices $P_v = (v_1, v_2, v_3)$ and each edge $(v, w) \in E$ by the nine edges $(v_i, w_j)$ for $i, j \in \{1, 2, 3\}$. Then we construct the Dotted-Triple Graph $DT(G)$ by replacing each edge $e_{i, j} = (v_i, w_j) \in T(E)$ by a vertex $v_e$ and two edges $(v_i, v_{e_{i, j}})$ and $(v_{e_{i, j}}, w_i)$. These are called the added vertices.

To insert tests and the computation into $DT(G)$, the Client chooses in each set $P_v$ one computation, one dummy and one trap vertex at random. The type of the added vertices is chosen so that the two computation vertices are linked with another computation vertex and two dummies are linked by a trap. All other added vertices are dummies. The preparation now requires not only $\ket{+_\theta}$ states but also computational basis states. The latter will be used to prepare the dummies as they will then isolate in a blind way a subgraph that allows to carry on the desired computation and trap qubits that will serve to test the behavior of the server. The result is that the computation can be performed on a dotted version of the graph $G$, while there is always one trap among each set $P_v$ and one trap among the nine added vertices associated to each edge in $E$. The most efficient attack on a single position has a probability $8/9$ of being undetected.

The computation is encoded in a code which corrects up to $c$ errors so that the Server must corrupt $c$ computation positions before the output of the computation is modified. This has the same effect as repeating the computation in the previous case: the Server must attack a sizable number of positions while still having a constant probability of being detected for each position. Ref.~\cite{KW17garbled} shows that this results in a security error for the SDQC protocol which is negligible in $c$. The AC security of their protocol can be shown using the techniques from~\cite{DFPR14composable}.

\paragraph{Requirements for SDQC.}
To summarise, using the protocols described above:
\begin{itemize}
\item In order to perform their computation blindly and/or test $\BQP$ computations, the client must generate single qubits in the states $\{\ket{+_\theta}_\theta\}$.
\item In order to test arbitrary quantum computations with quantum inputs and outputs, the client must generate the states above and computational basis states.
\item The client needs to trust that its preparation method does not leak the state's description to the server and that the prepared states are correct -- up to operations that are independent of the secret parameter.
\end{itemize}

These three points are captured by the security theorems of the protocols described above in their respective papers. As mentioned above, we recall for completeness a simplified version in Appendix~\ref{subapp:sec-sdqc}. The final point is usually achieved by giving to the client the ability to prepare and send these states using a small trusted quantum device that they control in a secure environment. This is clearly the strongest assumption from the client's perspective as producing and preparing a quantum state require complex devices. We now show how to replace this requirement with the requirement to securely apply a restricted class of operations to quantum systems that are otherwise produced by the untrusted server. By reducing the complexity of the devices to trust this entails stronger security.

\section{Single-Plane RSP from Remote Rotation}\label{sec:rsp-from-rrd}

The following Resource models in the AC framework the quantum capabilities of the weak Client in the SDQC protocol described in the previous section. It simply prepares a state in the $\X - \Y$ plane and sends it out. The AC security of the SDQC protocol from \cite{KKLM23asymmetric} reduces to the Client's capability of implementing this resource.

\begin{resource}[H]
  \caption{Single-Plane Remote State Preparation (SP-RSP)}
  \label{res:sp-rsp}
  \begin{algorithmic}[0]
    \STATE \textbf{Inputs:} The Sender inputs an angle $\theta \in \Theta = \qty{\frac{k\pi}{4}}_{k \in \qty{0, \ldots, 7}}$.
    \STATE \textbf{Computation by the Resource:} The Resource prepares and sends a qubit in state $\ket{+_\theta}$ to the Receiver.
  \end{algorithmic}
\end{resource}

\begin{figure}[t]
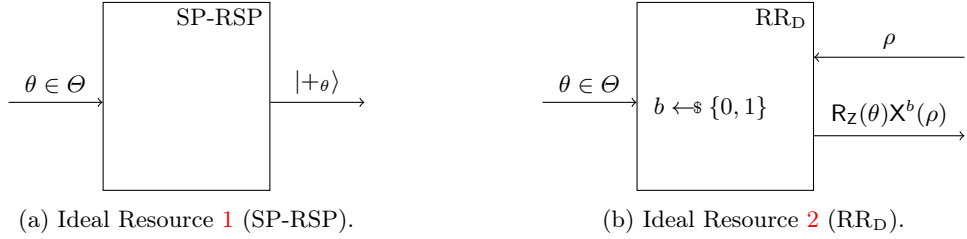

\centering
\begin{subfigure}[t]{0.45\textwidth}
  \centering
  \begin{bbrenv}{sprsp}
    \begin{bbrbox}[name=SP-RSP, minheight=2.5cm]\end{bbrbox}
    \bbrmsgspace{12mm}
    \bbrmsgto{top={$\theta \in \Theta$}}
    \bbrqryspace{12mm}
    \bbrqryto{top={$\ket{+_\theta}$}}
  \end{bbrenv}
  \caption{Ideal Resource~\ref{res:sp-rsp} (SP-RSP).}
  \label{fig:sp-rsp}
\end{subfigure}
\begin{subfigure}[t]{0.45\textwidth}
  \centering
  \begin{bbrenv}{rrd}
    \begin{bbrbox}[name=$\text{RR}_\text{D}$, minheight=2.5cm]\pseudocode{b \sample \bin
      }\end{bbrbox}
    \bbrmsgspace{12mm}
    \bbrmsgto{top={$\theta \in \Theta$}}
    \bbrqryspace{6mm}
    \bbrqryfrom{top={$\rho$}, length=2cm}
    \bbrqryspace{5mm}
    \bbrqryto{top={$\RZ(\theta) \X^b(\rho)$}, length=2cm}
  \end{bbrenv}
  \caption{Ideal Resource~\ref{res:rrd} ($\text{RR}_\text{D}$).}
  \label{fig:rrd}
\end{subfigure}
\caption{Ideal functionalities for Single-Plane Remote State Preparation (SP-RSP), and Remote Rotation with Dephasing ($\text{RR}_\text{D}$).}
\label{fig:sp-rsp_rrd}
\end{figure}

\begin{remark}\label{remark:equivalence-of-single-plane}
  Resource~\ref{res:sp-rsp} fixes the remotely prepared states to be in the $\X - \Y$-plane. Note however that the choice of plane is arbitrary in the sense that the choice of any other plane would yield an equivalent resource in the AC framework. To transform one such resource into another it suffices for the receiver to perform a change of basis after receiving the state from the available SP-RSP resource.
\end{remark}

We now introduce a new Resource, in which a party receives a single-qubit state, applies to it a random bit-flip and a rotation around the $\Z$ axis and sends it back. For a CPTP map $\cptp C$ acting on $n$ qubits and an $n$-qubit mixed state $\rho$, we will use the notation $\cptp C(\rho)$ to mean $\cptp C \rho \cptp C^\dagger$.

\begin{resource}[ht]
  \caption{Remote Rotation with Dephasing ($\text{RR}_\text{D}$)}
  \label{res:rrd}
  \begin{algorithmic}[0]
    \STATE \textbf{Inputs:}
    \begin{itemize}
    \item The Sender inputs an angle $\theta \in \Theta$.
    \item The Receiver inputs a single qubit in state $\rho$.
    \end{itemize}
    \STATE \textbf{Computation by the Resource:}
    \begin{enumerate}
    \item The Resource samples a bit $b \sample \bin$ uniformly at random.
    \item The Resource outputs a qubit in state $\RZ(\theta) \X^b(\rho)$ to the Receiver.
    \end{enumerate}
  \end{algorithmic}
\end{resource}

\begin{remark}
  Remark~\ref{remark:equivalence-of-single-plane} analogously applies to the plane chosen in Resource~\ref{res:rrd}: to apply RR in a different plane, the Receiver performs a unitary change of basis both before and after using Resource~\ref{res:rrd}.
\end{remark}

The $\text{RR}_\text{D}$ Resource gives less power to the Sender compared to the SP-RSR Resource since it does not control the state which the Receiver inputs. However, we demonstrate that the SP-RSP Resource can in fact be constructed with only one call to the $\text{RR}_\text{D}$ Resource via the following Protocol~\ref{prot:rsp-from-rrd}. The Receiver simply produces a $\ket{+}$ state, sends it to the Sender, who applies a random bit-flip and a rotation and sends it back.

\begin{protocol}[ht]
  \caption{SP-RSP from $\text{RR}_\text{D}$}
  \label{prot:rsp-from-rrd}
  \begin{algorithmic} [0]
    \STATE \textbf{Input:} The Sender inputs an angle $\theta \in \Theta$.
    \STATE \textbf{Protocol:}
    \begin{enumerate}
    \item The Sender and Receiver call the $\text{RR}_\text{D}$ Resource~\ref{res:rrd}:
    \begin{itemize}
    	\item The Sender inputs the angle $\theta$.
    	\item The Receiver inputs a qubit in the state $\ket{+}$.
    	\item The Resource returns a single qubit to the Receiver who sets it as its output.
    \end{itemize}
    \end{enumerate}
  \end{algorithmic}
\end{protocol}

The security of Protocol~\ref{prot:rsp-from-rrd} is given in the following Theorem~\ref{thm:rsp-from-rrd}, whose proof can be found in Appendix~\ref{app:rsp-from-rrd}.

\begin{theorem}[Security of Protocol \ref{prot:rsp-from-rrd}]
\label{thm:rsp-from-rrd}
\newcounter{count:rsp-from-rrd}
\setcounterref{count:rsp-from-rrd}{thm:rsp-from-rrd}
  Protocol~\ref{prot:rsp-from-rrd} perfectly constructs Resource~\ref{res:sp-rsp} (SP-RSP) from Resource~\ref{res:rrd} ($\text{RR}_\text{D}$).
\end{theorem}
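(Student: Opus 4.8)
The plan is to verify the two things that ``perfectly constructs'' demands in the AC framework: first, that when both parties are honest the real protocol and the ideal resource have identical input/output behaviour; and second, that for every dishonest configuration there is a simulator attached to the matching interface of SP-RSP making the ideal world indistinguishable from the real protocol built on $\text{RR}_\text{D}$. The only configuration with real content is a malicious Receiver; a dishonest Sender merely substitutes a different angle $\theta'\in\Theta$, which the simulator forwards to SP-RSP verbatim, and a dishonest pair leaves nothing to protect. First I would dispatch correctness: with both parties honest the Receiver feeds $\ket{+}$ into $\text{RR}_\text{D}$, which returns $\RZ(\theta)\X^b(\ket{+}\bra{+})$, and since $\X\ket{+}=\ket{+}$ the random bit $b$ is irrelevant, so the output is exactly $\ket{+_\theta}\bra{+_\theta}$, matching Resource~\ref{res:sp-rsp}.

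For the malicious Receiver I would first pin down the real-world map. Let $S$ be the single qubit the Receiver submits to $\text{RR}_\text{D}$ and $R$ the private register it keeps, in a joint state $\rho_{RS}$. The resource applies the $b$-average to $S$, and because the rotation is applied \emph{after} the bit-flip it factors out of the average:
$$\tfrac12\sum_{b\in\bin}\RZ(\theta)\X^b(\rho_{RS}) = (\Id_R\otimes\RZ(\theta))\,\Big(\tfrac12\big(\rho_{RS}+\X_S(\rho_{RS})\big)\Big)\,(\Id_R\otimes\RZ(\theta))^\dagger.$$
The inner average is dephasing of $S$ in the $\{\ket{+},\ket{-}\}$ basis and is manifestly independent of $\theta$; it turns $\rho_{RS}$ into a $\ket{\pm}$-diagonal state $A_{+}\otimes\ket{+}\bra{+} + A_{-}\otimes\ket{-}\bra{-}$. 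Applying $\RZ(\theta)$ then sends $\ket{\pm}\mapsto\ket{\pm_\theta}$, so the real output is $A_{+}\otimes\ket{+_\theta}\bra{+_\theta}+A_{-}\otimes\ket{-_\theta}\bra{-_\theta}$.

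This normal form dictates the simulator $\sigma_R$, which emulates the $\text{RR}_\text{D}$ interface towards the Receiver without ever learning $\theta$. When the Receiver hands in its qubit $S$, the simulator measures $S$ in the $\{\ket{+},\ket{-}\}$ basis, recording the outcome $s\in\bin$; this realises exactly the $\theta$-independent dephasing above and leaves $R$ in the matching residual state. The simulator then takes the qubit $\ket{+_\theta}$ emitted by SP-RSP, applies $\Z^s$ to it --- using $\Z\ket{+_\theta}=\ket{-_\theta}$ --- and returns the result. The joint $(R,\text{output})$ state produced this way is $A_{+}\otimes\ket{+_\theta}\bra{+_\theta}+A_{-}\otimes\ket{-_\theta}\bra{-_\theta}$, identical to the real-world state computed above. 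Since the two density operators coincide exactly for every adversarial $\rho_{RS}$, every distinguisher has advantage precisely $0$, which is the perfect-construction claim.

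The main obstacle is exactly property (ii) flagged in the introduction: a deviation the Receiver performs before the rotation must be reproducible as a secret-independent deviation after the state is prepared. The averaging over $b$ is what makes this possible --- it collapses any submitted qubit into a mixture that is diagonal in the \emph{fixed} $\{\ket{+},\ket{-}\}$ basis, hence carries no $\theta$ dependence, before $\RZ(\theta)$ is applied. I expect the only care needed is bookkeeping the entanglement with $R$ (so the claimed equality is an operator identity on $RS$, not merely on the reduced qubit) and checking that measuring in $\{\ket{+},\ket{-}\}$ and re-preparing $\ket{s_\theta}$ indeed commute past the rotation, which the factorisation above already guarantees.
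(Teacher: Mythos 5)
Your proof is correct, and it takes a genuinely different route from the paper's. You observe that averaging over the hidden bit $b$ makes the real channel entanglement-breaking: $\tfrac12\sum_{b}\RZ(\theta)\X^b(\cdot)$ factors as a $\theta$-independent dephasing in the $\{\ket{+},\ket{-}\}$ basis followed by $\RZ(\theta)$, which licenses a destructive measure-and-prepare simulator (measure the submitted qubit in the $\X$ eigenbasis, record $s$, output $\Z^s\ket{+_\theta}$ using the resource state). The paper's Simulator instead keeps the adversary's qubit coherent: it applies a random pre-flip $\X^{b'}$, a $\CNOT$ with the resource state $\ket{+_\theta}$ as target, measures that target in the computational basis to obtain $b$, corrects with $\X^b$, and then uses the identity $\X^b \RZ((-1)^b\theta)\X^{b'} = \RZ(\theta)\X^{b\oplus b'}$ (up to global phase) together with the uniformity of the hidden $b\oplus b'$ to match the real and ideal transcripts exactly. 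Both arguments give zero distinguishing advantage, so both establish perfect construction. Your normal form makes the structural reason for security more transparent --- any pre-rotation deviation is collapsed into a classical mixture diagonal in a \emph{fixed} basis before $\theta$ ever enters, and you correctly track the side-entanglement with the register $R$ as an operator identity rather than a statement about the reduced qubit --- whereas the paper's gadget-teleportation simulator returns the same physical qubit the adversary submitted and directly exhibits storage-and-retrieval of the channel $\RZ(\theta)\X^b$ from the program state $\ket{+_\theta}$, which is precisely the link the Discussion section draws to the storage-and-retrieval literature. One point worth making explicit either way: in your ideal world the output is a fresh qubit while the adversary's qubit is consumed by the measurement, but since the two worlds produce literally the same density operator on all registers the distinguisher holds, this is immaterial; and your brief dismissal of the dishonest-Sender case is consistent with the paper, which likewise proves soundness only against a malicious Receiver.
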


We then use the composability of the AC framework along with Theorems~\ref{thm:rsp-from-rrd} and~\ref{thm:sp-sdqc} to construct an SDQC protocol for $\BQP$ computations without trusted preparations or measurements. The security error of the resulting protocol is the same as in Theorem~\ref{thm:sp-sdqc} since Protocol~\ref{prot:rsp-from-rrd} incurs no security loss.

\begin{corollary}[SDQC for $\BQP$ From Trusted Rotations]
\label{cor:sdqc-bqp}
There exists an efficient protocol which $\epsilon$-constructs the SDQC Resource \ref{res:sdqc} for $\BQP$ computations from $\abs{V}N$ instances of the Remote Rotation with Dephasing Resource~\ref{res:rrd}, while only leaking the graph $G = (V, E)$ used in the MBQC computation, with $\epsilon$ negligibly small in $N$.
\end{corollary}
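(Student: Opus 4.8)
The plan is to prove the corollary entirely by composition in the AC framework, assembling the two construction results already available. The first ingredient is Theorem~\ref{thm:sp-sdqc}, which supplies an efficient SDQC protocol for $\BQP$ computations that $\epsilon$-constructs Resource~\ref{res:sdqc} from a collection of SP-RSP (Resource~\ref{res:sp-rsp}) instances while leaking only the graph $G = (V, E)$, with $\epsilon$ negligibly small in $N$. Since that protocol prepares one state in the $\X - \Y$ plane for each of the $\abs{V}$ vertices of $G$ in each of the $N$ rounds (test and computation rounds combined), it invokes exactly $\abs{V}N$ independent copies of SP-RSP. The second ingredient is Theorem~\ref{thm:rsp-from-rrd}, which states that Protocol~\ref{prot:rsp-from-rrd} \emph{perfectly} constructs a single SP-RSP instance from a single $\text{RR}_\text{D}$ (Resource~\ref{res:rrd}) instance, i.e.\ with zero security error and with the Receiver producing its own $\ket{+}$ rather than relying on any trusted preparation.

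First I would substitute, in the SDQC protocol of Theorem~\ref{thm:sp-sdqc}, each of the $\abs{V}N$ calls to SP-RSP by an independent invocation of Protocol~\ref{prot:rsp-from-rrd}, each consuming one fresh $\text{RR}_\text{D}$ instance. This yields a protocol that uses $\abs{V}N$ copies of Resource~\ref{res:rrd} and no trusted preparation or measurement. The composability of the AC framework then guarantees that the distinguishing advantage of the composed protocol against any unbounded adversary is at most the sum of the errors of the two stages: the $\epsilon$ of Theorem~\ref{thm:sp-sdqc} plus the cumulative error of the $\abs{V}N$ RSP replacements. Because each replacement is a perfect ($0$-error) construction by Theorem~\ref{thm:rsp-from-rrd}, this cumulative error vanishes identically, so the composed protocol still $\epsilon$-constructs Resource~\ref{res:sdqc} with the same $\epsilon$, negligible in $N$.

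Two side conditions then remain, both immediate. For leakage, Protocol~\ref{prot:rsp-from-rrd} reveals nothing about the secret angle $\theta$ beyond what the ideal SP-RSP already reveals --- this is exactly the content of perfect construction --- so the composed protocol leaks only the graph $G$, as before. For efficiency, each replacement adds only the preparation of a $\ket{+}$ state and a single resource call, an $O(1)$ overhead per prepared qubit, so the $O(\abs{V}N)$ overall complexity, and hence the efficiency of the protocol, is preserved.

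The only subtle point, which I expect to be the main obstacle to state cleanly rather than a genuine difficulty, is that composability must be invoked across $\abs{V}N$ \emph{simultaneous} instances rather than a single one. Formally one uses the parallel-composition form of the AC composition theorem to replace all SP-RSP copies at once, and confirms that the $\text{RR}_\text{D}$ instances consumed by distinct replacements are independent so that the single-instance simulator of Theorem~\ref{thm:rsp-from-rrd} can be run in parallel without interference. Since each instance contributes zero error and this independence is built into the protocol, the replacement incurs no accumulated loss; this is precisely the ``no security loss'' remark accompanying the corollary statement.
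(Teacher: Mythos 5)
Your proposal is correct and takes essentially the same route as the paper: the corollary is obtained there precisely by composing Theorem~\ref{thm:sp-sdqc} with the perfect (zero-error) construction of Theorem~\ref{thm:rsp-from-rrd} via the AC composition theorem (Theorem~\ref{thm:comp-res}), so the security error stays at $\epsilon$, negligible in $N$, and only the graph $G$ is leaked. Your explicit handling of the simultaneous replacement of all $\abs{V}N$ SP-RSP instances is just a spelled-out application of the parallel-composition (context-insensitivity) clause that the paper invokes implicitly.
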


\section{RSP of Arbitrary Single-qubit States from Remote Clifford and Unitary}\label{sec:arbitrary-states}

In this section, we show how to prepare arbitrary single-qubit states without using trusted preparations or measurements. The price to pay for this increase in functionality is an increase in the Sender's computational power. Instead of single-qubit operations as in Section~\ref{sec:rsp-from-rrd}, the Sender must now perform Clifford operations on states whose size depends on the security parameter. Since we show that the security error of the protocol decreases exponentially in the security parameter, the protocol remains resource-efficient.

\begin{resource}[ht]
  \caption{Clifford Remote State Preparation (C-RSP)}
  \label{res:c-rsp}
  \begin{algorithmic}[0]
    \STATE \textbf{Inputs:} The Sender inputs the classical description of a single-qubit Clifford operator $\cptp C$.
    \STATE \textbf{Computation by the Resource:} The Resource prepares and sends a qubit in state $\cptp C\ket{0}$ to the Receiver.
  \end{algorithmic}
\end{resource}

Resource~\ref{res:c-rsp} (C-RSP) is a first step towards preparing more general states since it can generate all six single-qubit Pauli eigenstates. Then, Resource~\ref{res:rsp} (RSP) is a generalisation of both Resources~\ref{res:sp-rsp} and \ref{res:c-rsp} which allows for the remote preparation of arbitrary single-qubit states.

\begin{resource}[ht]
  \caption{Remote State Preparation (RSP)}
  \label{res:rsp}
  \begin{algorithmic}[0]
    \STATE \textbf{Inputs:} The Sender inputs the classical description of a single-qubit unitary $\cptp U$.
    \STATE \textbf{Computation by the Resource:} The Resource prepares and sends a qubit in state $\cptp U\ket{0}$ to the Receiver.
  \end{algorithmic}
\end{resource}

\begin{figure}[t]
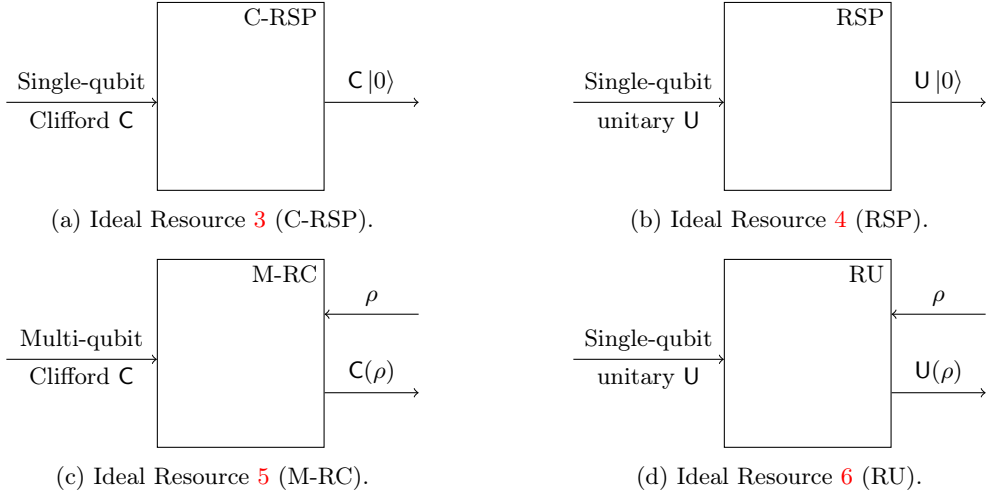

\centering
\begin{subfigure}[t]{0.45\textwidth}
  \centering
  \begin{bbrenv}{crsp}
    \begin{bbrbox}[name=C-RSP, minheight=2.5cm]\end{bbrbox}
    \bbrmsgspace{12mm}
    \bbrmsgto{top={Single-qubit}, bottom={Clifford $\cptp C$}, length=2cm}
    \bbrqryspace{12mm}
    \bbrqryto{top={$\cptp C\ket{0}$}}
  \end{bbrenv}
  \caption{Ideal Resource~\ref{res:c-rsp} (C-RSP).}
  \label{fig:c-rsp}
\end{subfigure}
\begin{subfigure}[t]{0.45\textwidth}
  \centering
  \begin{bbrenv}{rsp}
    \begin{bbrbox}[name=RSP, minheight=2.5cm]\end{bbrbox}
    \bbrmsgspace{12mm}
    \bbrmsgto{top={Single-qubit}, bottom={unitary $\cptp U$}, length=2cm}
    \bbrqryspace{12mm}
    \bbrqryto{top={$\cptp U\ket{0}$}}
  \end{bbrenv}
  \caption{Ideal Resource~\ref{res:rsp} (RSP).}
  \label{fig:rsp}
\end{subfigure}
\\[2ex]
\begin{subfigure}[t]{0.45\textwidth}
  \centering
  \begin{bbrenv}{mrc}
    \begin{bbrbox}[name=M-RC, minheight=2.5cm]\end{bbrbox}
    \bbrmsgspace{12mm}
    \bbrmsgto{top={Multi-qubit}, bottom={Clifford $\cptp C$}, length=2cm}
    \bbrqryspace{6mm}
    \bbrqryfrom{top={$\rho$}}
    \bbrqryspace{5mm}
    \bbrqryto{top={$\cptp C(\rho)$}}
  \end{bbrenv}
  \caption{Ideal Resource~\ref{res:m-rc} (M-RC).}
  \label{fig:m-rc}
\end{subfigure}
\begin{subfigure}[t]{0.45\textwidth}
  \centering
  \begin{bbrenv}{ru}
    \begin{bbrbox}[name=RU, minheight=2.5cm]\end{bbrbox}
    \bbrmsgspace{12mm}
    \bbrmsgto{top={Single-qubit}, bottom={unitary $\cptp U$}, length=2cm}
    \bbrqryspace{6mm}
    \bbrqryfrom{top={$\rho$}}
    \bbrqryspace{5mm}
    \bbrqryto{top={$\cptp U(\rho)$}}
  \end{bbrenv}
  \caption{Ideal Resource~\ref{res:ru} (RU).}
  \label{fig:ru}
\end{subfigure}
\caption{Ideal functionalities for Clifford Remote State Preparation (C-RSP), Remote State Preparation (RSP), Multi-qubit Remote Clifford (M-RC), and Remote Unitary (RU).}
\label{fig:c-rsp_rsp_m-rc}
\end{figure}

To construct Resource~\ref{res:rsp}, we start by showing how to construct Resource~\ref{res:c-rsp} without trusted preparations or measurements. This will allows us to \emph{break out} of the single-plane prison. This first step requires the Sender to actively test the Receiver for possible deviations, and hence incurs a (controllable) security loss, whereas the protocol which construct arbitrary RSP from Clifford RSP will be perfectly secure. The Sender's capabilities are now captured by Resource~\ref{res:m-rc}: it can apply any Clifford of its choice to $n$ qubits.

\begin{resource}[ht]
  \caption{Multi-qubit Remote Clifford (M-RC)}
  \label{res:m-rc}
  \begin{algorithmic}[0]
    \STATE \textbf{Inputs:}
    \begin{itemize}
    \item The Sender inputs the classical description of an $n$-qubit Clifford operator $\cptp C$.
    \item The Receiver inputs $n$ qubits in state $\rho$.
    \end{itemize}
    \STATE \textbf{Computation by the Resource:} The Resource outputs $n$ qubits in state $\cptp C(\rho)$ to the Receiver.
  \end{algorithmic}
\end{resource}

The goal of Protocol~\ref{prot:crsp-from-mrc} if to produce the state $\cptp C \ket{0}$ for a Clifford $\cptp C$ chosen by the Sender. The Receiver starts by preparing the state $\ket{\mathbf{0}}_n$, where the index denote the size. It sends it to the Sender, who applies a dephasing $\Z^{\mathbf{d}}$ and a unitary $\cptp U_g$ for a random invertible linear function $g$ over $\mathbb{Z}_2^n$.\footnote{$g$ is chosen by sampling uniformly at random a basis $(\mathbf{b}_1, \ldots, \mathbf{b}_n)$ of $\mathbb{Z}_2^n$ and setting $g(\mathbf{x}) = x_1\mathbf{b}_1 \oplus \ldots x_n\mathbf{b}_n$, where $x_i$ is the $i$\textsuperscript{th} the bits of $\mathbf{x}$ and $\oplus$ is the addition operation of $\mathbb{Z}_2^n$.} For $\mathbf{x} \in \mathbb{Z}_2^n$, we have $U_g\ket{\mathbf{x}} = \ket{g(\mathbf{x})}$. Then, this unitary preserves the state $\ket{\mathbf{0}}_n$ but perfectly randomises all other computational basis states.\footnote{Since $g(\mathbf{0}) = \mathbf{0}$, then $U_g\ket{\mathbf{0}}_n = \ket{\mathbf{0}}_n$.} The Sender then applies its desired single-qubit Clifford $\cptp C$ to the first qubit and encrypts it with a Quantum One-Time-Pad $\cptp P_2$. It applies $\X^{\mathbf{r}}$ to the other $n-1$ qubits before sending back the $n$-qubit state. The Receiver has to measure the last $n-1$ qubits and send the outcome $\mathbf{r}'$ to the Sender, who checks that it matches $\mathbf{r}$. If the test passes, the Sender sends the decryption key $\cptp P_2^\dagger$, otherwise it aborts. The Receiver can then decrypt the state and get the correct outcome.

\begin{protocol}[ht]
  \caption{C-RSP from M-RC}
  \label{prot:crsp-from-mrc}
  \begin{algorithmic} [0]
  	\STATE \textbf{Public Information:} Security parameter $n$.
    \STATE \textbf{Input:} The Sender inputs the classical description of a single-qubit Clifford operator $\cptp C$.
    \STATE \textbf{Protocol:}
    \begin{enumerate}
    \item The Sender samples uniformly at random bit-strings $\mathbf{d} \sample \bin^n$ and $\mathbf{r} \sample \bin^{n-1}$. It samples uniformly at random a single-qubit Pauli operator $\cptp P_2 \sample \mathcal{P}_1$. It samples a random invertible linear function $g$ over $\mathbb{Z}_2^n$. The Sender computes the $n$ qubit Clifford:
	\begin{align*}
    \cptp C_1 = \left(\cptp P_2\cptp C\otimes \X^{\mathbf{r}}\right)\cptp U_g\Z^{\mathbf{d}},
    \end{align*}	
    where $\cptp P_2, \cptp C$ act on the first qubit, $\X^{\mathbf{r}}$ acts on the other $n-1$ qubits and $\cptp U_g\ket{\mathbf{s}} = \ket{g(\mathbf{s})}$ for all $\mathbf{s} \in \bin^n$.
    \item The Sender and Receiver call the M-RC Resource~\ref{res:m-rc}:
    \begin{itemize}
    	\item The Sender inputs the classical description of $\cptp C_1$.
    	\item The Receiver inputs $n$ qubits in the state $\ket{\mathbf{0}}_n$.
    	\item The Receiver receives $n$ qubits as output.
    \end{itemize}
    \item The Receiver measures the last $n-1$ qubits in the computational basis and sends the measurement outcome $\mathbf{r}' \in \bin^{n-1}$ to the Sender.
    \item If $\mathbf{r}' \neq \mathbf{r}$, the Sender sends $\Abort$. Otherwise it sends the classical description of $\cptp P_2^\dagger$ to the Receiver.
    \item The Receiver applies $\cptp P_2^\dagger$ to the unmeasured qubit and sets it as its output.
    \end{enumerate}
  \end{algorithmic}
\end{protocol}

Intuitively, if the Receiver is honest then both the dephasing and $\cptp U_g$ have no effect and $\mathbf{r} = \mathbf{r}'$, otherwise $\mathbf{r}'$ is completely random. This is captured by Theorem~\ref{thm:crsp-from-mrc} below, whose proof is given in Appendix~\ref{app:crsp-from-mrc}.

\begin{theorem}[Security of Protocol \ref{prot:crsp-from-mrc}]
\label{thm:crsp-from-mrc}
\newcounter{count:crsp-from-mrc}
\setcounterref{count:crsp-from-mrc}{thm:crsp-from-mrc}
  Let $n$ be the security parameter used in Protocol~\ref{prot:crsp-from-mrc}. Then Protocol~\ref{prot:crsp-from-mrc} $\epsilon_n$-constructs Resource~\ref{res:c-rsp} (C-RSP) from Resource~\ref{res:m-rc} (M-RC), for $\epsilon_n = \frac{1}{2^n - 1}$.
\end{theorem}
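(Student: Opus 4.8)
The plan is to work in the AC framework and establish the two requirements for an $\epsilon_n$-construction: correctness when both parties are honest, and, for every dishonest Receiver, a simulator attached to the Receiver interface of Resource~\ref{res:c-rsp} (C-RSP) that reproduces the Receiver's view up to trace distance $\epsilon_n$. The dishonest-Sender and both-honest cases are not the crux: a dishonest Sender chooses $\cptp C$ itself, so the simulator just forwards its classical input to the ideal resource, and correctness I would verify by direct substitution. All the work is in the dishonest-Receiver case, where the Receiver (the Server) controls the state fed into Resource~\ref{res:m-rc} (M-RC) and may deviate arbitrarily when producing $\mathbf r'$ and when handling the Sender's reply.

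First I would fix the simulator. It samples exactly the honest Sender's randomness $\mathbf d$, $\mathbf r$, $g$, $\cptp P_2$, receives the register the distinguisher feeds to what it believes is M-RC, and applies $\cptp U_g \Z^{\mathbf d}$ to it. Crucially, since it does not know $\cptp C$, instead of applying $\cptp C$ to the first qubit it discards that qubit and substitutes the state $\cptp C\ket 0$ obtained from the ideal C-RSP resource, then applies $\cptp P_2$ to it and $\X^{\mathbf r}$ to the remaining $n-1$ qubits, returning all $n$ qubits as the simulated M-RC output; upon receiving $\mathbf r'$ it runs the Sender's test, controlling abort versus release of $\cptp P_2^\dagger$ at the Receiver interface. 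Correctness then follows by substitution: an honest Receiver inputs $\ket{\mathbf 0}_n$, which is fixed by both $\Z^{\mathbf d}$ and $\cptp U_g$ (as $g(\mathbf 0)=\mathbf 0$), so the first qubit becomes $\cptp P_2\cptp C\ket 0$, the last $n-1$ become $\ket{\mathbf r}$, the test passes with certainty, and decryption returns exactly $\cptp C\ket 0$, contributing no error.

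The heart of the argument is bounding the dishonest-Receiver distance. By construction the real and simulated executions are identical maps on the adversary's joint input--environment state except on the first qubit: the real protocol applies $\cptp C$ to the true first qubit of $\cptp U_g\Z^{\mathbf d}\rho$, whereas the simulator applies $\cptp C$ to $\ket 0$. Hence the distance is at most the probability, maximised over deviations, that the protocol accepts while this first qubit differs from $\ket 0$. I would bound this in three reductions. Since $\mathbf d$ is never revealed, averaging over it dephases the input into a computational-basis mixture $\sum_{\mathbf y} p_{\mathbf y}\ket{\mathbf y}\bra{\mathbf y}$. Since $g$ is never revealed, averaging over the uniformly random invertible linear map fixes $\ket{\mathbf 0}$ and sends each nonzero $\ket{\mathbf y}$ to $\ket{g(\mathbf y)}$ with $g(\mathbf y)$ uniform over the $2^n-1$ nonzero vectors; the honest branch $\mathbf y=\mathbf 0$ is precisely the one with first qubit $\ket 0$. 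Finally the one-time pad $\X^{\mathbf r}$ on the last $n-1$ qubits hides the checked value, so acceptance requires the Receiver to guess $\mathbf r$ from $g(\mathbf y)_{2\ldots n}\oplus\mathbf r$.

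The main obstacle, and where the exact constant appears, is fusing these into one counting estimate. I would condition on a worst-case nonzero input and count equally likely configurations of $(g(\mathbf y),\mathbf r)$: writing $g(\mathbf y)=(a,\mathbf w)$ with $a$ the first bit, the joint is uniform over the $2^{n-1}(2^n-1)$ tuples with $(a,\mathbf w)\neq\mathbf 0$, and the distinguisher observes only $\mathbf v=\mathbf w\oplus\mathbf r$. For each observed $\mathbf v$ there is exactly one configuration with $a=1$ and $\mathbf r$ equal to the adversary's guess, independently of the guessing strategy, giving $2^{n-1}$ harmful-accepting configurations out of $2^{n-1}(2^n-1)$, i.e. probability exactly $\frac{1}{2^n-1}$. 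An input supported on $\ket{\mathbf 0}$ never triggers this event, so a general input only lowers the weight and the distance is at most $\epsilon_n=\frac{1}{2^n-1}$. The only remaining care is to rule out that quantum side information in the environment helps beyond this classical guess, which follows because $g$ is sampled independently of that environment and, after the dephasing step, randomises every nonzero input to a uniform nonzero codeword regardless of it.
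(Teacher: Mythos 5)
Your proposal is correct and shares the paper's skeleton: your simulator is exactly the paper's Simulator~\ref{sim:crsp-from-mrc} (sample the honest randomness, apply $\cptp U_g \Z^{\mathbf{d}}$ to the received register, trace out the first qubit and substitute the $\cptp C\ket{0}$ obtained from the ideal C-RSP resource under a fresh Pauli pad, apply $\X^{\mathbf{r}}$, run the test), and you use the same three randomisation facts --- the $\Z^{\mathbf{d}}$ twirl dephasing the input into a computational-basis mixture, the $g$-average fixing $\ket{\mathbf{0}}$ and uniformising all nonzero basis states, and the $\X^{\mathbf{r}}$ pad hiding the checked value. Where you genuinely diverge is the final accounting. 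The paper computes the real and ideal density matrices explicitly, Pauli-twirls the distinguisher's post-return unitary over $\X^{\mathbf{r}}\Z^{\mathbf{d}'}$ (exploiting that the returned state is diagonal on the last $n-1$ qubits) to reduce any deviation to a convex combination of classical $\X^{\mathbf{a}}$ attacks, and then multiplies acceptance probabilities by conditional post-measurement trace distances in the two cases $\mathbf{a}=\mathbf{0}$ and $\mathbf{a}\neq\mathbf{0}$, each yielding $\frac{1-p_{\mathbf{0}}}{2^n-1}$; you instead run a coupling argument --- the two worlds implement identical channels except on the event ``accept while the effective first bit is $1$'' --- and bound that event by a direct count (one harmful configuration of $(a,\mathbf{w},\mathbf{r})$ per observed padded value $\mathbf{v}$, out of $2^n-1$), giving the same $\epsilon_n=\frac{1}{2^n-1}$; the paper's computation additionally shows the advantage is \emph{exactly} $\frac{1-p_{\mathbf{0}}}{2^n-1}$, maximised at $p_{\mathbf{0}}=0$. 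Your route is shorter and isolates the combinatorial core; the paper's twirl buys an explicit classification of all quantum deviations. Two steps of yours deserve a line of justification each, both supplied explicitly in the paper: (i) the inequality ``distance $\leq \Pr[\text{accept} \wedge \text{first qubit} \neq \ket{0}]$'' silently uses that on abort branches the withheld key leaves the padded unmeasured qubit maximally mixed \emph{in both worlds} (a pointwise coupling of the pads would fail there) and that the accept/abort statistics are identically distributed across the two worlds; and (ii) your closing remark about side information must also cover deviations that entangle the returned first qubit with the register producing $\mathbf{r}'$, which is harmless precisely because the unrevealed pad renders that qubit maximally mixed and independent of $(a,\mathbf{w})$ before the key is released --- this is what the paper's twirl, together with absorbing the post-key unitary into the distinguisher's decision, handles. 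Neither point is a gap in substance, but both are needed for your ``hence'' to be airtight.
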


In order to recover the full Bloch sphere we introduce the Remote Unitary Resource which allows the Sender to apply a unitary of its choice to a state before sending it back to the Receiver.

\begin{resource}[ht]
  \caption{Remote Unitary (RU)}
  \label{res:ru}
  \begin{algorithmic}[0]
    \STATE \textbf{Inputs:}
    \begin{itemize}
    \item The Sender inputs the classical description of a single-qubit unitary $\cptp U$.
    \item The Receiver inputs a single qubit in state $\rho$.
    \end{itemize}
    \STATE \textbf{Computation by the Resource:} The Resource outputs a qubit in state $\cptp U(\rho)$ to the Receiver.
  \end{algorithmic}
\end{resource}

We can then combine the C-RSP Resource with the RU Resource to construct a protocol for RSP with arbitrary states. The protocol simply sends a random Clifford state $\cptp C\ket{0}$ to the Receiver using the C-RSP Resource, who then inputs it into the RU Resource. This Resource is used to apply a random unitary $\cptp U_1$ to the state. The Sender can then send the classical description of unitary $\cptp U_2 = \cptp U \cptp C^\dagger \cptp U_1^\dagger$ for the Receiver to apply to recover the correct state. Since $\cptp C^\dagger \cptp U_1^\dagger$ is perfectly random, it randomises $\cptp U_2$ which therefore leaks no information about the desired state $\cptp U \ket{0}$.

\begin{protocol}[ht]
  \caption{RSP from C-RSP and RU}
  \label{prot:rsp-from-crsp-and-ru}
  \begin{algorithmic} [0]
    \STATE \textbf{Input:} The Sender inputs the classical description of a single-qubit unitary $\cptp U$.
    \STATE \textbf{Protocol:}
    \begin{enumerate}
    \item The Sender samples a single-qubit Clifford operator $\cptp C$ uniformly at random and sends its classical description to Resource~\ref{res:c-rsp} (C-RSP), which sends a qubit in the state $\cptp C\ket{0}$ to the Receiver.
    \item The Sender and Receiver call the RU Resource~\ref{res:ru}:
    \begin{itemize}
    	\item The Sender inputs the classical description of a Haar-random single-qubit unitary $\cptp U_1$.
    	\item The Receiver inputs the qubit that it received in step $1$ from the C-RSP Resource.
    	\item The Receiver receives a single qubit as output.
    \end{itemize}
    \item The Sender computes $\cptp U_2 = \cptp U \cptp C^\dagger \cptp U_1^\dagger$, and sends its classical description to the Receiver.
    \item The Receiver applies $\cptp U_2$ to the qubit received in step $2$ from the RU Resource and sets it as its output.
    \end{enumerate}
  \end{algorithmic}
\end{protocol}

The security of Protocol \ref{prot:rsp-from-crsp-and-ru} is given below, with the proof being in Appendix~\ref{app:rsp-from-crsp-and-ru}.

\begin{theorem}[Security of Protocol~\ref{prot:rsp-from-crsp-and-ru}]
\label{thm:rsp-from-crsp-and-ru}
\newcounter{count:rsp-from-crsp-and-ru}
\setcounterref{count:rsp-from-crsp-and-ru}{thm:rsp-from-crsp-and-ru}
  Protocol~\ref{prot:rsp-from-crsp-and-ru} perfectly constructs Resource~\ref{res:rsp} (RSP) from Resource~\ref{res:c-rsp} (C-RSP) and Resource~\ref{res:ru} (RU).
\end{theorem}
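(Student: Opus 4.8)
The plan is to prove the statement in the Abstract Cryptography framework by exhibiting, for each choice of dishonest interface, a simulator that makes the real system (the honest converters running Protocol~\ref{prot:rsp-from-crsp-and-ru} on top of the ideal Resources~\ref{res:c-rsp} and~\ref{res:ru}) and the ideal system (simulator on top of Resource~\ref{res:rsp}) produce identical states at every interface; since the claim is a \emph{perfect} construction, the distinguishing advantage must be exactly $0$. Three of the four cases are easy. With both parties honest, correctness is the one-line operator identity $\cptp U_2\cptp U_1\cptp C=\cptp U\cptp C^\dagger\cptp U_1^\dagger\,\cptp U_1\cptp C=\cptp U$, so the Receiver's qubit is $\cptp U_2\cptp U_1\cptp C\ket{0}=\cptp U\ket{0}$, exactly the output of Resource~\ref{res:rsp}. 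A dishonest Sender (honest Receiver) is handled by reading off the effective unitary $\cptp W=\cptp U_2\cptp U_1\cptp C$ from the adversary's three messages and feeding $\cptp W$ into the ideal resource, which is legitimate because the C-RSP and RU resources enforce that $\cptp C$ is Clifford and $\cptp U_1$ is unitary; and the case of two dishonest parties is vacuous. The whole content is therefore the case of a malicious Receiver against an honest Sender.

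For that case the simulator is attached to the Receiver interface of Resource~\ref{res:rsp}, from which it receives the \emph{single} qubit $\ket{\psi}=\cptp U\ket{0}$ but not the classical description of $\cptp U$; it must emulate the three things the Receiver sees in the real protocol, namely the C-RSP output $\cptp C\ket{0}$, one use of the RU channel $\rho\mapsto\cptp U_1(\rho)$, and the classical message $\cptp U_2=\cptp U\cptp C^\dagger\cptp U_1^\dagger$. I would first reparametrise using the Haar-randomness of $\cptp U_1$: sampling $\cptp U_1$ Haar-randomly and setting $\cptp U_2=\cptp U\cptp C^\dagger\cptp U_1^\dagger$ is the same joint distribution as sampling $\cptp U_2$ Haar-randomly and setting $\cptp U_1=\cptp U_2^\dagger\cptp U\cptp C^\dagger$ (the map $\cptp U_1\mapsto\cptp U\cptp C^\dagger\cptp U_1^\dagger$ preserves Haar measure), so $\cptp U_2$ is Haar-distributed and independent of $\cptp C$. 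Since $\cptp U_2$ is then a known classical unitary, one use of the RU channel together with $\cptp U_2$ is equivalent to one use of the channel $\rho\mapsto\cptp U\cptp C^\dagger\rho\,\cptp C\cptp U^\dagger$ plus an independent Haar-random $\cptp U_2$ that the simulator can generate on its own. Thus the Receiver's entire $\cptp U$-dependent view reduces to holding the random stabiliser state $\cptp C\ket{0}$ together with a single use of the effective channel $\cptp U\cptp C^\dagger$, for a uniformly random single-qubit Clifford $\cptp C$.

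The heart of the argument — and the step I expect to be the main obstacle — is to show that, after averaging over the uniform Clifford $\cptp C$, this reduced view depends on $\cptp U$ only through $\ket{\psi}\bra{\psi}=\cptp U\ket{0}\bra{0}\cptp U^\dagger$. The mechanism is that $\cptp U$ always enters sandwiched as $\cptp U(\cdot)\cptp U^\dagger$ around a Clifford-twirled operator, and the single-qubit Clifford group, being a unitary $2$-design, forces the surviving correlations into the form $\cptp U\Z\cptp U^\dagger=2\ket{\psi}\bra{\psi}-\id$; for instance, when the Receiver keeps the C-RSP qubit and feeds an arbitrary Bloch vector $\hat n$ into RU, the twirl yields the joint two-qubit state
\[ \tfrac14\,\id\otimes\id \;+\; \tfrac1{12}\,(\hat n\cdot\vec\sigma)\otimes\bigl(2\ket{\psi}\bra{\psi}-\id\bigr), \]
an affine function of $\ket{\psi}\bra{\psi}$ alone. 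The delicate part is to carry this out for an \emph{arbitrary} malicious input to RU, including one entangled with the kept C-RSP qubit and with the Receiver's private memory, rather than only on the honest input; this is exactly where the full Clifford-twirl computation (equivalently, the $2$-design identity) must be invoked. Because the resulting real Receiver-interface state is affine, hence degree one, in $\ket{\psi}\bra{\psi}$, a single copy of $\cptp U\ket{0}$ suffices to realise it: the simulator is precisely the converter that prepares a random $\cptp C\ket{0}$, emulates RU by this fixed $\ket{\psi}$-linear channel, and outputs a Haar-random $\cptp U_2$. With $\cptp U$ entering both systems only through $\cptp U\ket{0}$, the real and ideal systems coincide for every environment and every $\cptp U$, giving a zero-error construction and hence no security degradation, as claimed.
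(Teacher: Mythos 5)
Your reductions are sound up to the last step, and they in fact mirror the paper's own transcript manipulations: your reparametrisation of $(\cptp U_1,\cptp U_2)$ by Haar invariance and the folding of the known $\cptp U_2$ into the RU output are exactly the paper's substitutions, and your Clifford-twirl claim is correct and is the paper's core tool. Indeed, writing the Receiver's attack unitary as $\sum_{\cptp P}\cptp P\otimes B_{\cptp P}$ and averaging over $\cptp C$, the final view is $\ketbra{\psi}\otimes\bigl(\mathcal B_0(\rho)+\tfrac13\mathcal B_1(\rho)\bigr)+(\id-\ketbra{\psi})\otimes\tfrac23\mathcal B_1(\rho)$ with $\cptp U$-independent CP maps $\mathcal B_0,\mathcal B_1$, which confirms your affineness claim and reproduces your $\hat n$ example. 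The genuine gap is the concluding inference ``affine, hence degree one, in $\ketbra{\psi}$, so a single copy of $\cptp U\ket{0}$ suffices.'' That implication is false as a general principle: the disturbed branch carries the coefficient $\id-\ketbra{\psi}=\ketbra{\psi^\perp}$, and $\ket{\psi}\mapsto\ketbra{\psi^\perp}$ is the universal NOT, an affine but unphysical map. What must actually be shown is that the entire two-slot comb (first message, then the RU slot, against arbitrary inputs entangled with the kept qubit and private memory) factors through one \emph{fixed} CPTP converter holding a single copy of $\ket{\psi}$ — a strictly stronger statement than matching the output state attack-by-attack, and you never exhibit this converter.

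The danger is concrete: the most natural reading of your simulator (send $\cptp C\ket{0}$ for a uniform Clifford $\cptp C$; at the RU slot undo $\cptp C$, measure in the computational basis, output $\ket{\psi}$ on outcome $0$ and a uniformly-random nonidentity Pauli applied to $\ket{\psi}$ otherwise) \emph{fails}. Test it against the distinguisher that keeps the first qubit and feeds one half of an EPR pair into RU: the real view is $\tfrac14\,\Phi\otimes\ketbra{\psi}+\tfrac14\,(\id_4-\Phi)\otimes\tfrac{2\id-\ketbra{\psi}}{3}$, where $\Phi$ is the maximally entangled projector on (kept qubit, EPR partner), whereas the measure-and-replace simulator produces a state whose conditional output on the $\id_4-\Phi$ sector is proportional to $5\ketbra{\psi}+4\ketbra{\psi^\perp}$ instead of the real $3\ketbra{\psi}+6\ketbra{\psi^\perp}$. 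A valid simulator does exist, but it needs an idea you haven't supplied. The paper avoids the issue entirely: its simulator sends $\cptp V_1\ket{\psi}$ for Haar-random $\cptp V_1$ as the first message, answers RU with an independent Haar $\cptp V_2$, and reveals $\cptp V_1^\dagger\cptp V_2^\dagger$, so that after the same transcript manipulations all $\cptp U$-dependence cancels and security reduces to distinguishing a uniform Clifford from a Haar unitary given one copy of $\cptp V\ket{0}$ and one use of $\cptp V^\dagger$ — which the Clifford twirl plus the unitary invariance of the resulting uniform-Pauli noise settles. Alternatively, your plan can be repaired while keeping the twirl at centre stage by using an \emph{entangled} first message: prepare an EPR pair, send one half, perform a Bell measurement between the retained half and the RU input, and output $\ket{\psi}$ on the identity outcome and a uniformly-random nonidentity Pauli applied to $\ket{\psi}$ on the other three; this reproduces the comb above exactly. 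Without one of these constructions, your proof is incomplete precisely at the step that separates verifiability from mere blindness.
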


\begin{remark}
Note that it is possible to construct directly the RSP Resource for arbitrary states if we simply replace $\cptp C$ in Protocol~\ref{prot:crsp-from-mrc} with the Sender's unitary $\cptp U$. The proofs follow as before with no change whatsoever. We choose to present it in this way to preserve the modularity of the construction.
\end{remark}

\begin{remark}
It is possible to construct an RSP resource for arbitrary single-qubit states from the Remote Unitary Resource alone if we are willing to give more power to the adversary. We demonstrate this in Appendix~\ref{app:non-verif} by constructing the RSP with Selective NOT Resource (RSP-SN). This resource still prepares a state in a basis of the Sender's choice, but the adversarial Receiver can choose which one of the basis' states is produced. In the end, the received state is of the form $\cptp U \ket{b}$ for a unitary $\cptp U$ chosen by the Sender and a bit $b$ chosen by the Receiver.

This RSP-SN Resource allows the Receiver to selectively deviate depending on the state chosen by the Sender. For example, if the Sender prepares either $\ket{+_\theta}$ or $\ket{b}$, then the Receiver can always flip the output by setting $b = 1$ and later apply $\Z$. This second operation will flip back the rotated states but not the computational basis states. None of the SDQC protocols that exist as of now take into account such selective deviations -- and there are known attacks which leverage this effect, such as the one presented in \cite{KKLM23asymmetric}. Whether it is possible to design an SDQC protocol that can use the RSP-SN Resource as a building block by mitigating the additional power given to the adversary is therefore an interesting open question. 
\end{remark}

We again use the composability of the AC framework along with Theorems~\ref{thm:crsp-from-mrc}, \ref{thm:rsp-from-crsp-and-ru} and~\ref{thm:full-sdqc} to construct an SDQC protocol for arbitrary quantum computations without trusted preparations or measurements. The security error of the resulting protocol is increased by $\frac{1}{2^n - 1}$ compared to Theorem~\ref{thm:full-sdqc} every time the RSP Resource is replaced by the compositions of Protocols~\ref{prot:crsp-from-mrc} and \ref{prot:rsp-from-crsp-and-ru}. However, since the number of calls to the RSP Resource in the initial protocol is linear in the size of the computation and security parameter, the overall loss is still negligible in the security parameter.

\begin{corollary}[SDQC for Arbitrary Quantum Computations From Trusted Unitary Operations]
There exists an efficient protocol which $\epsilon'$-constructs the SDQC Resource \ref{res:sdqc} for arbitrary quantum computations from $\order{c(\abs{V} + \abs{E})}$ instances of the $n$-qubit Remote Clifford Resource~\ref{res:m-rc} and Remote Unitary Resource~\ref{res:ru}, while only leaking the graph $G = (V, E)$ used in the MBQC computation, with $\epsilon'$ negligibly small in $c$ and $n$. The Remote Unitary only needs to be able to apply $\Z$-rotations, Hadamard, and Pauli $\X$ operations.
\end{corollary}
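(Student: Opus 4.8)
The plan is to obtain the statement purely by composition, chaining the three security theorems of this paper with the security of the underlying SDQC protocol for quantum input/output computations. I would start from Theorem~\ref{thm:full-sdqc}, which $\epsilon_{\text{SDQC}}$-constructs the SDQC Resource~\ref{res:sdqc} from trusted preparations of the $\ket{+_\theta}$ and computational-basis states dictated by the dotted-triple-graph construction of~\cite{KW17garbled}, with $\epsilon_{\text{SDQC}}$ negligible in the code distance $c$. Each such prepared qubit is exactly one call to the RSP Resource~\ref{res:rsp}, and the number of prepared qubits equals the number of dotted-triple-graph vertices summed over the encoded copies, namely $\order{c(\abs V + \abs E)}$. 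This simultaneously fixes the resource count and the leakage profile: since our RSP replacements leak no information about the prepared state, the blindness of the outer protocol is inherited and the composed construction still discloses only the graph $G$.

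Next I would substitute each RSP call by its implementation. Theorem~\ref{thm:rsp-from-crsp-and-ru} constructs RSP perfectly from one C-RSP and one RU call, while Theorem~\ref{thm:crsp-from-mrc} $\epsilon_n$-constructs C-RSP from a single M-RC call with $\epsilon_n = \frac{1}{2^n - 1}$. Composing the two, each RSP instance is implemented from one M-RC and one RU call at error exactly $\epsilon_n$, the RU step being lossless. Applying the composition theorem of the AC framework to replace all $\order{c(\abs V + \abs E)}$ RSP instances at once, and using that distinguishing advantages are subadditive under composition, the total error is bounded by
\begin{equation*}
\epsilon' = \epsilon_{\text{SDQC}} + \order{c(\abs V + \abs E)}\cdot\frac{1}{2^n - 1}.
\end{equation*}
For $c$ and $\abs V + \abs E$ polynomial, the first term is negligible in $c$ and the second in $n$, so $\epsilon'$ is negligible in both parameters.

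It then remains to verify efficiency and to pin down the trusted gate set. Efficiency is routine: the only non-trivial sampling in Protocol~\ref{prot:crsp-from-mrc} is that of a random invertible linear map $g$ over $\mathbb{Z}_2^n$, done in polynomial time, and $\cptp U_g$ is a linear-reversible (hence Clifford) circuit of $\order{n^2}$ $\CNOT$s, the remaining operations being a single-qubit Clifford, a quantum one-time pad and bit flips. For the gate set, note that the RU resource is only ever asked to apply the single random unitary $\cptp U_1$; the Euler decomposition $\cptp U_1 = \RZ(\alpha)\,\Ha\,\RZ(\beta)\,\Ha\,\RZ(\gamma)$ (valid up to an irrelevant global phase) shows that $\{\RZ(\theta), \Ha\}$ already generate every single-qubit unitary, and $\X = \Ha\,\RZ(\pi)\,\Ha$ up to phase, so the concrete generating set $\{\RZ(\theta), \Ha, \X\}$ suffices. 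Choosing the Euler angles with the appropriate distribution reproduces the Haar-random $\cptp U_1$ required by Protocol~\ref{prot:rsp-from-crsp-and-ru}.

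The step I expect to be most delicate is the composition bookkeeping rather than any fresh quantum argument: one must check that replacing a polynomial number of RSP instances in parallel keeps the error strictly additive, so that a $\frac{1}{2^n - 1}$-secure sub-protocol invoked $\order{c(\abs V + \abs E)}$ times stays negligible in $n$, and that the simulators guaranteed by Theorems~\ref{thm:crsp-from-mrc} and~\ref{thm:rsp-from-crsp-and-ru} plug into the outer SDQC simulator without the distinguisher exploiting the newly exposed sub-protocol interfaces. Confirming that the trusted device truly needs nothing finer than $\{\RZ(\theta), \Ha, \X\}$ -- in particular that the compilation of $\cptp U_1$ and the sampling of its angles introduce no additional control requirement -- is the other point deserving care.
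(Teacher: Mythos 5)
Your proposal is correct and follows essentially the same route as the paper, which likewise obtains the corollary by AC composition of Theorem~\ref{thm:full-sdqc} with Theorems~\ref{thm:crsp-from-mrc} and~\ref{thm:rsp-from-crsp-and-ru}, with the same additive error accounting $\epsilon' = \epsilon_{\text{SDQC}} + \order{c(\abs{V}+\abs{E})}\cdot\frac{1}{2^n-1}$ under the assumption that the computation size is polynomial in the security parameter. Your explicit Euler-decomposition justification of the $\{\RZ(\theta), \Ha, \X\}$ gate set is a detail the paper leaves implicit, and it is sound.
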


\section{Discussion}\label{sec:discussion}

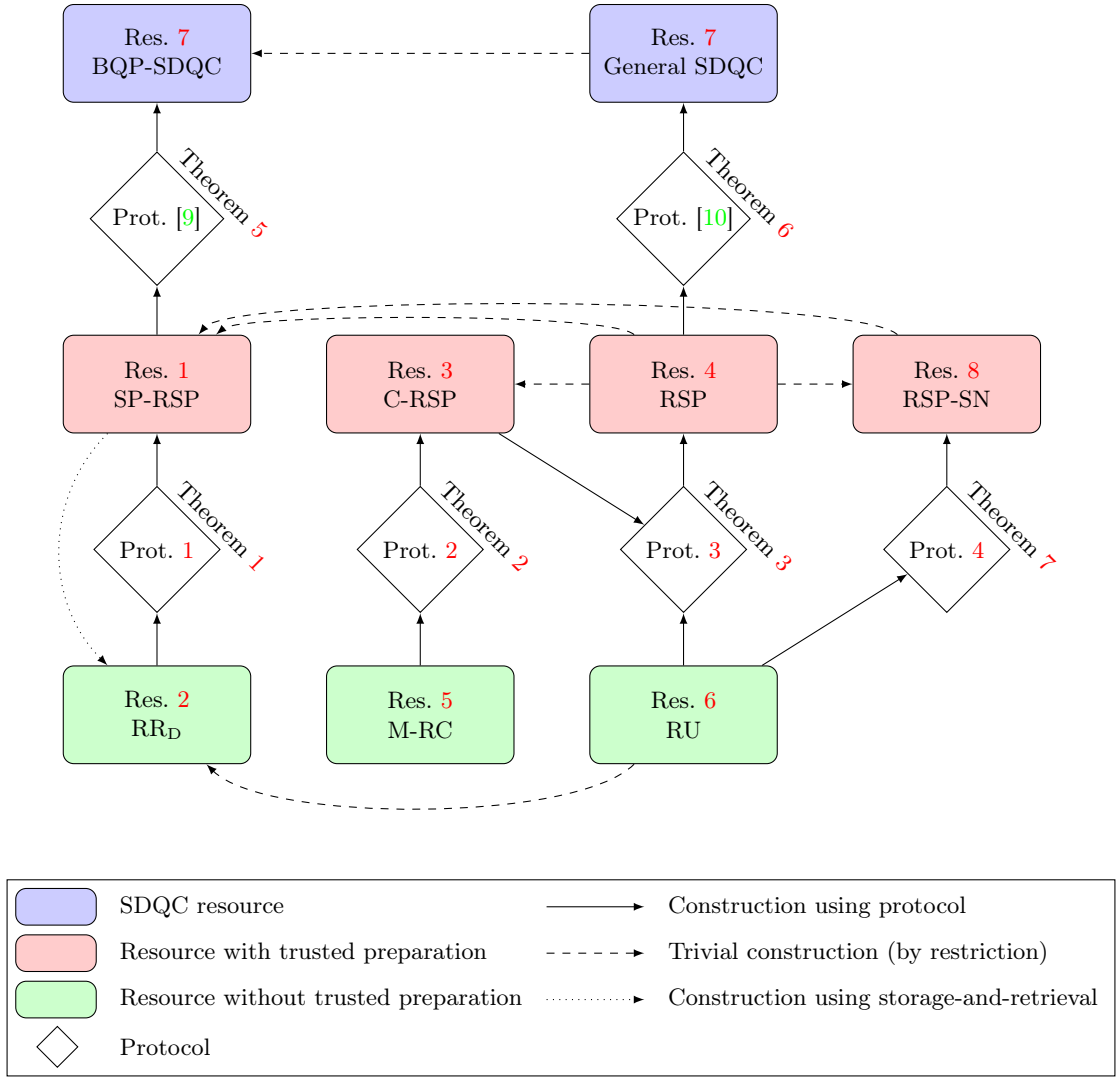
\begin{figure}[htp]
\centering
\begin{tikzpicture}[node distance=2.2cm, auto]
\tikzstyle{noprepresource} = [rectangle, draw, fill=green!20, text width=7em, text centered, rounded corners, minimum height=4em]
	\tikzstyle{prepresource} = [rectangle, draw, fill=red!20, text width=7em, text centered, rounded corners, minimum height=4em]
	\tikzstyle{appliedresource} = [rectangle, draw, fill=blue!20, text width=7em, text centered, rounded corners, minimum height=4em]
	\tikzstyle{protocol} = [diamond, draw, text width=4.5em, text badly centered, inner sep=0pt]
	\tikzstyle{line} = [draw, -]
	\tikzstyle{arrow} = [draw, -latex]
	\tikzstyle{dashedarrow} = [draw, -latex, dashed]
	
\node [appliedresource] (res-bqp-sdqc) {Res.~\ref{res:sdqc} \\ BQP-SDQC};
    \node [protocol, below of=res-bqp-sdqc,label={[rotate=-45]above right:{\hspace*{-5mm}Theorem~\ref{thm:sp-sdqc}}}] (prot-bqp-sdqc) {Prot.~\cite{KKLM23asymmetric}};
    \node [prepresource, below of=prot-bqp-sdqc] (res-sp-rsp) {Res.~\ref{res:sp-rsp} \\ SP-RSP};
    \node [prepresource, right of=res-sp-rsp, node distance=3.5cm] (res-c-rsp) {Res.~\ref{res:c-rsp} \\ C-RSP};
    \node [prepresource, right of=res-c-rsp, node distance=3.5cm] (res-rsp) {Res.~\ref{res:rsp} \\ RSP};
    \node [protocol, above of=res-rsp,label={[rotate=-45]above right:{\hspace*{-5mm}Theorem~\ref{thm:full-sdqc}}}] (prot-general-sdqc) {Prot.~\cite{KW17garbled}};
    \node [appliedresource, above of=prot-general-sdqc] (res-general-sdqc) {Res.~\ref{res:sdqc} \\ General SDQC};
    \node [prepresource, right of=res-rsp, node distance=3.5cm] (res-rsp-sn) {Res.~\ref{res:rsp-sn} \\ RSP-SN};
    \node [protocol, below of=res-sp-rsp,label={[rotate=-45]above right:{\hspace*{-5mm}Theorem~\ref{thm:rsp-from-rrd}}}] (prot-sp-rsp) {Prot.~\ref{prot:rsp-from-rrd}};
    \node [protocol, below of=res-c-rsp,label={[rotate=-45]above right:{\hspace*{-5mm}Theorem~\ref{thm:crsp-from-mrc}}}] (prot-c-rsp) {Prot.~\ref{prot:crsp-from-mrc}};
    \node [protocol, below of=res-rsp,label={[rotate=-45]above right:{\hspace*{-5mm}Theorem~\ref{thm:rsp-from-crsp-and-ru}}}] (prot-rsp) {Prot.~\ref{prot:rsp-from-crsp-and-ru}};
    \node [protocol, below of=res-rsp-sn,label={[rotate=-45]above right:{\hspace*{-5mm}Theorem~\ref{thm:rsp-sn-from-ru}}}] (prot-rsp-sn) {Prot.~\ref{prot:rsp-sn-from-ru}};
    \node [noprepresource, below of=prot-sp-rsp] (res-rrd) {Res.~\ref{res:rrd} \\ $\text{RR}_\text{D}$};
    \node [noprepresource, right of=res-rrd, node distance=3.5cm] (res-m-rc) {Res.~\ref{res:m-rc} \\ M-RC};
    \node [noprepresource, right of=res-m-rc, node distance=3.5cm] (res-ru) {Res.~\ref{res:ru} \\ RU};
    
\path [arrow] (res-sp-rsp) -- (prot-bqp-sdqc);
    \path [arrow] (prot-bqp-sdqc) -- (res-bqp-sdqc);
    \path [arrow] (res-rsp) -- (prot-general-sdqc);
    \path [arrow] (prot-general-sdqc) -- (res-general-sdqc);
    \path [arrow] (res-rrd) -- (prot-sp-rsp);
    \path [arrow] (prot-sp-rsp) -- (res-sp-rsp);
    \path [arrow] (res-m-rc) -- (prot-c-rsp);
    \path [arrow] (prot-c-rsp) -- (res-c-rsp);
    \path [arrow] (res-c-rsp) -- (prot-rsp);
    \path [arrow] (res-ru) -- (prot-rsp);
    \path [arrow] (prot-rsp) -- (res-rsp);
    \path [arrow] (res-ru) -- (prot-rsp-sn);
    \path [arrow] (prot-rsp-sn) -- (res-rsp-sn);
\path [dashedarrow] (res-general-sdqc) -- (res-bqp-sdqc);
    \path [dashedarrow] (res-rsp) -- (res-c-rsp);
	\path [dashedarrow] (res-rsp) -- (res-rsp-sn);
	\draw (res-rsp) edge[out=135,in=40,looseness=0.2,-latex,dashed] (res-sp-rsp);
	\draw (res-rsp-sn) edge[out=135,in=50,looseness=0.2,-latex,dashed] (res-sp-rsp);
	\draw (res-ru) edge[out=225,in=315,looseness=0.5,-latex,dashed] (res-rrd);
\draw (res-sp-rsp) edge[out=225,in=135,looseness=1,-latex,dotted] (res-rrd);

\matrix [column sep=2mm, row sep=1mm, draw, anchor=west, xshift=1cm, below of=res-rrd, node distance=3.5cm, anchor=west, xshift=-3cm]
	{
		\node [appliedresource, text width=2.625em, minimum height=1.5em] {}; & \node {SDQC resource}; & \path [arrow] (0, 0) -- (1.3, 0); & \node {Construction using protocol}; \\
		\node [prepresource, text width=2.625em, minimum height=1.5em] {}; & \node {Resource with trusted preparation}; & \path [dashedarrow] (0, 0) -- (1.3, 0); & \node {Trivial construction (by restriction)}; \\
		\node [noprepresource, text width=2.625em, minimum height=1.5em] {}; & \node {Resource without trusted preparation}; & \path [arrow,dotted] (0, 0) -- (1.3, 0); & \node {Construction using storage-and-retrieval}; \\
		\node [protocol, text width=1.6875em, xshift=0.8575em] {}; & \node {Protocol}; \\
	};
\end{tikzpicture}
\caption{An overview of the dependencies of the various resources and protocols discussed in this paper. The arrows point at the resources that are realised by the respective protocol.  Green resources do not require trusted preparations. The functionalities of red resources implement trusted state preparations. The blue resources implement secure delegated computation, and are applications of the protocols in this paper.  Dashed arrows mark trivial constructions that can be obtained by simply restricting the functionality of the more general resource.  The dotted arrow represents a straightforward construction that implies the equivalence of the respective resources, and establishes a link to the storage-and-retrieval model.}
\label{fig:overview}
\end{figure}

\paragraph*{Summary of Contributions.}
The main contributions of this paper are depicted in Figure~\ref{fig:overview} which describes the dependencies of the different resources. We present two independent ways of converting protocols that rely on trusted preparations into protocols that do not need this assumption to be secure.

First, we reduce the remote preparation of single-qubit states from a single plane to a slight alteration of the Remote State Rotation (RSR) resource. While so far RSR was only known to be sufficient for the blind delegation of quantum computations~\cite{MKAC22qenclave}, this new result immediately gives rise to a protocol for the fully secure delegation of $\BQP$ computation that does not require trusted preparations or measurements, and solely relies on trusted single-qubit unitaries by the verifier.
Secondly, we show how the remote preparation of arbitrary single-qubit states can be reduced to the trusted application of multi-qubit Clifford gates and single-qubit unitaries. By the general composability of all involved protocols, we can construct a protocol for the secure delegation of arbitrary quantum computations that does not require trusted preparations or measurements, and relies on trusted gates on a quantum register whose size grows with the security parameter, but is succinct with respect to the size of the computation.

\paragraph*{Limitations.}
All resources in this work that accept a quantum state as an input from the receiver fix its dimension, which is an implicit assumption, or which needs to be guaranteed separately.
For example, if the client is photonic, one would need to assume that it receives single photons and not higher dimensional states.
Otherwise, the client would need to check this condition by relying on non-destructive photon counting, which however -- at this time -- seems experimentally challenging to realise.
It remains an interesting open problem to lift or weaken this assumption of dimensionality through modifications of known protocols.

\paragraph*{Link to Storage-and-retrieval.}
As depicted in Figure~\ref{fig:overview}, the SP-RSP and $\text{RR}_\text{D}$ Resources are equivalent in the sense that a single use of either can be used to securely implement the other in the Abstract Cryptography framework.
This seems to suggest a link to the task of storage-and-retrieval~\cite{SBZ19optimal} of the specific quantum channel implemented by Resource~\ref{res:rrd} ($\text{RR}_\text{D}$). In fact, our result implies that storage-and-retrieval of this channel is possible with probability 1.
Conversely, this seems to be a necessary condition for the construction of a simulator in the security proof of this equivalence, since the simulator is required to succeed in the task of storage-and-retrieval.
We leave for future research the exploration of this link for other resources and quantum channels.

\paragraph*{Application to Quantum Secure Multi-Party Computation.}
The asymmetric QSMPC protocol in~\cite{KKLM23asymmetric} relies on a Collaborative Remote State Preparation protocol for the joint preparation of states in a single plane which guarantees verifiability from the perspective of any single client against arbitrary coalitions of malicious parties. Since it was so far unknown how to design an analogous protocol for the Collaborative Remote Preparation of arbitrary single-qubit states, these results could not be generalised and used for verification protocols requiring states from beyond a single plane. This essentially limited the application of the aforementioned QSMPC protocol to $\BQP$ computations.

Using the results obtained in this paper, the verification of arbitrary quantum computations now only requires trusted applications of single-qubit unitaries and multi-qubit Clifford gates, without relying on trusted preparations or measurements.
These more basic resources can be turned into their collaborative versions using constructions that are very similar to the techniques used in~\cite{KKLM23asymmetric} and~\cite{PLLC23multi}. A formal proof of this claim can be found in Appendix~\ref{app:collaborative_remote_operations}.
This immediately gives rise to an asymmetric QSMPC protocol for arbitrary quantum computations, and with weak clients whose only required quantum abilities are the application of multi-qubit Cliffords and arbitrary single-qubit unitaries.

\paragraph*{Impact on Experiments.}
Ref.~\cite{PLLC23multi} presents an experimental implementation of a multi-client blind delegated computation protocol. Since they allow the client to use single-qubit rotation in a single plane of the Bloch sphere, our techniques could be used directly to upgrade their protocol and experiment to full security by adding a verification procedure at no additional hardware cost or change in the experiment's setup.
This example also shows that our removal of the trust assumption on the quantum state preparation stage opens up the possibility of new architectures or networks for fully secure QSMPC, such as the Qline.

While quantum verification protocols based solely on trusted remote operations as demonstrated in this work seem to rely on two-way quantum communication, this is rather an artefact of the security model in which the states must be communicated between trusted and untrusted parties. In this way, the communication of the states between the source and the remote unitary is made explicit, while it was implicit in previous protocols that assumed the source to be trusted.
However, our proof also captures the case in which the honest party is in possession of the source, but doesn't trust it. In this case -- which might be the most realistic experimentally -- there wouldn't actually be any additional communication compared to the previous protocol with a trusted source, and effectively, the only difference, up to the additional bit-flip, is the removal of the trust assumption on the source. In other terms, the photons must always travel between the source and the hardware which implements the phase rotation, but in one model this communication is implicit and in the other explicit.

\paragraph*{Quantum Verification with Selective NOT.}
In this paper, we showed that it is possible to verify quantum computations with information-theoretic security while only relying on trusted unitary transformations of quantum information, and without trusted preparations of quantum states or their measurements.
While the verification of $\BQP$ computations can be reduced to Resource~\ref{res:rrd} ($\text{RR}_\text{D}$) which requires only single-qubit operations by the verifier, the verification of arbitrary quantum computations, including sampling and those with quantum output, requires not only Resource~\ref{res:ru} (RU), but with Resource~\ref{res:m-rc} (M-RC) also trusted multi-qubit operations, where the number of qubits the verifier operates on simultaneously scales with the security parameter.

In Appendix~\ref{app:non-verif}, we show that Resource~\ref{res:ru} (RU) (and hence trusted single-qubit operations) is indeed sufficient to construct a slightly weaker resource than verifiable RSP, namely Resource~\ref{res:rsp-sn} (RSP-SN) which gives the malicious party some additional power. It remains an open question, however, whether this weaker version of Remote State Preparation is sufficient to securely implement the verification of arbitrary quantum computation (for sampling, or with quantum output), with negligible security error.

\subsection*{Acknowledgments}
The authors would like to thank Theodoros Kapourniotis and Yao Ma for insightful discussions that led to this project.
They also wish to express their gratefulness to Sami Abdul Sater for providing feedback on an early draft of this paper, and pointing out the prior lack of figures.
This work was supported by the European Union’s Horizon 2020 research and innovation program through the FET project PHOQUSING (“PHOtonic Quantum SamplING machine” – Grant Agreement No. 899544).
DL and HO acknowledge funding from the ANR research grant ANR-21-CE47-0014 (SecNISQ).
The work has received funding from the Horizon Europe grant agreement no. 101102140 (Quantum Internet Alliance).
This work has been co-funded by the European Commission as part of the EIC accelerator program under the grant agreement 190188855 for SEPOQC project, and by the Horizon-CL4 program under the grant agreement 101135288 for EPIQUE project.

\phantomsection
\addcontentsline{toc}{section}{References}

\bibliographystyle{splncs04}
\bibliography{../qubib/qubib.bib}

\appendix

\section{Formal Description of the Abstract Cryptography Framework}
\label{app:ac}
\subsection{Protocol Security Model}

In the AC framework, the purpose of a secure protocol $\pi$ is, given a number of available resources $\mathcal{R}$, to construct a new resource -- written as $\pi \mathcal{R}$.  
This new resource can be itself reused in a future protocol. 
A resource~$\mathcal{R}$ is described as a sequence of CPTP maps with an internal state.  
It has \emph{input and output interfaces} describing which party may exchange states with it. 
It works by having each party send it a state (quantum or classical) at one of its input interfaces, applying the specified CPTP map after all input interfaces have been initialised and then outputting the resulting state at its output interfaces in a specified order. 
An interface is said to be \emph{filtered} if it is only accessible by a dishonest player. 
The actions of an honest player $i$ in a given protocol is also represented as a sequence of efficient CPTP maps $\pi_i$ -- called the \emph{converter} of party~$i$ -- acting on their internal and communication registers. We focus here on the two-party setting, in which case $\pi = (\pi_1, \pi_2)$.

In order to define the security of a protocol, we need to give a pseudo-metric on the space of resources. 
We consider for that purpose a special type of converter called a \emph{distinguisher}, whose aim is to discriminate between two resources $\mathcal{R}_1$ and $\mathcal{R}_2$, each having the same number of input and output interfaces. 
It prepares the input, interacts with one of the resources according to its own (possibly adaptive) strategy, and guesses which resource it interacted with by outputting a single bit. 
Two resources are said to be indistinguishable if no distinguisher can guess correctly with good probability.

\begin{definition}[Statistical Indistinguishability of Resources]
  \label{def:ind-res}
  Let $\epsilon > 0$, and let $\mathcal{R}_1$ and $\mathcal{R}_2$ be two resources with same input and output interfaces. 
  The resources are \emph{$\epsilon$-statistically-indistinguishable} if, for all unbounded distinguishers $\mathcal{D}$, we have:
  \begin{equation}
    \label{eq:dist}
    \Bigl\lvert\Pr[b = 1 \mid b \leftarrow \mathcal{D}\mathcal{R}_1] - \Pr[b = 1 \mid b \leftarrow \mathcal{D}\mathcal{R}_2]\Bigr\rvert \leq \epsilon.
  \end{equation}
  We then write $\mathcal{R}_1 \underset{\epsilon}{\approx} \mathcal{R}_2$.
\end{definition}

The construction of a given resource $\mathcal{S}$ by the application of protocol $\pi$ to resource $\mathcal{R}$ can then be expressed as the indistinguishability between resources $\mathcal{S}$ and $\pi \mathcal{R}$. 
More specifically, this captures the correctness of the protocol. 
The security is captured by the fact that the resources remain indistinguishable if we allow some parties to deviate in the sense that they are no longer forced to use the converters defined in the protocol but can use any other CPTP maps instead. 
This is done by removing the converters for those parties in Equation~\ref{eq:dist} while keeping only $\pi_H = \prod_{i \in H} \pi_i$ where $H$ is the set of honest parties. 
The security is formalised as follows in Definition \ref{def:ac-sec} in the case of two parties.

\begin{definition}[Construction of Resources]\label{def:ac-sec}
  Let $\epsilon > 0$.
We say that a two-party protocol $\pi$ $\epsilon$-statistically-constructs resource $\mathcal{S}$ from resource $\mathcal{R}$ if:
\begin{enumerate}
\item It is correct: $\pi \mathcal{R} \underset{\epsilon}{\approx} \mathcal{S}$;
\item It is secure against malicious party $P_i$ for $i \in \{1, 2\}$: there exists a \emph{simulator} (converter) $\sigma_i$ such that $\pi_j\mathcal{R} \underset{\epsilon}{\approx} \mathcal{S} \sigma_i$, where $j \neq i$.
\end{enumerate}
\end{definition}

The General Composition Theorem (Theorem 1 from \cite{MR11abstract-cryptography}) presented below captures the security of protocols which use other secure protocols as subroutines, in sequence or in parallel.
\begin{theorem}[General Composability of Resources]\label{thm:comp-res}
	Let $\mathcal{R}$, $\mathcal{S}$ and $\mathcal{T}$ be resources, $\alpha$, $\beta$ and $\mathsf{id}$ protocols (where protocol $\mathsf{id}$ does not modify the resource it is applied to). Let $\circ$ and $\mid$ denote respectively the sequential and parallel composition of protocols and resources. Then the following implications hold:
	\begin{itemize}
		\item The protocols are \emph{sequentially composable}: if $\alpha \mathcal{R} \!\!\!\underset{\mathit{stat}, \epsilon_{\alpha}}{\approx}\!\!\! \mathcal{S}$ and $\beta \mathcal{S} \!\!\!\underset{\mathit{stat}, \epsilon_{\beta}}{\approx}\!\!\! \mathcal{T}$ then $(\beta \circ \alpha) \mathcal{R} \!\!\!\underset{\mathit{stat}, \epsilon_{\alpha} + \epsilon_{\beta}}{\approx}\!\!\! \mathcal{T}$.
		\item The protocols are \emph{context-insensitive}: if $\alpha \mathcal{R} \!\!\!\underset{\mathit{stat}, \epsilon_{\alpha}}{\approx}\!\!\! \mathcal{S}$ then $(\alpha \mid \mathsf{id}) (\mathcal{R} \mid \mathcal{T}) \!\!\!\underset{\mathit{stat}, \epsilon_{\alpha}}{\approx}\!\!\! (\mathcal{S} \mid \mathcal{T})$.
\end{itemize}
\end{theorem}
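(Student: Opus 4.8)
The plan is to deduce Theorem~\ref{thm:comp-res} from two elementary properties of the indistinguishability pseudo-metric underlying the construction relation, and then obtain each conclusion by chaining inequalities. Recall that the relation in the statement is the construction relation of Definition~\ref{def:ac-sec}, which bundles a correctness clause and, for each malicious party, a simulator clause, both measured by the distance $d(\mathcal{R}_1,\mathcal{R}_2) := \sup_{\mathcal{D}}\bigl\lvert\Pr[b=1\mid b\leftarrow\mathcal{D}\mathcal{R}_1]-\Pr[b=1\mid b\leftarrow\mathcal{D}\mathcal{R}_2]\bigr\rvert$ from Definition~\ref{def:ind-res}, so that $\mathcal{R}_1\underset{\epsilon}{\approx}\mathcal{R}_2$ is exactly $d(\mathcal{R}_1,\mathcal{R}_2)\le\epsilon$. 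First I would record the two facts I need. \textbf{(i) Triangle inequality:} for each fixed $\mathcal{D}$ the difference between the $\mathcal{R}_1$- and $\mathcal{R}_3$-probabilities is at most the sum of the two consecutive differences, so taking the supremum gives $d(\mathcal{R}_1,\mathcal{R}_3)\le d(\mathcal{R}_1,\mathcal{R}_2)+d(\mathcal{R}_2,\mathcal{R}_3)$. \textbf{(ii) Non-expansiveness:} for any converter $\gamma$ attached to an interface, $d(\gamma\mathcal{R}_1,\gamma\mathcal{R}_2)\le d(\mathcal{R}_1,\mathcal{R}_2)$, since any distinguisher $\mathcal{D}$ against $\gamma\mathcal{R}_1$ versus $\gamma\mathcal{R}_2$ coincides with the distinguisher $\mathcal{D}\gamma$ against $\mathcal{R}_1$ versus $\mathcal{R}_2$, so the left-hand supremum ranges over a subfamily of the right-hand one. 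The same absorption argument, with $\gamma$ replaced by an internally simulated copy of a fixed resource $\mathcal{T}$, shows $d(\mathcal{R}_1\mid\mathcal{T},\mathcal{R}_2\mid\mathcal{T})\le d(\mathcal{R}_1,\mathcal{R}_2)$.

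For sequential composability I would prove the two clauses of Definition~\ref{def:ac-sec} separately, using $(\beta\circ\alpha)\mathcal{R}=\beta(\alpha\mathcal{R})$ throughout. Correctness follows from $d(\beta(\alpha\mathcal{R}),\mathcal{T})\le d(\beta(\alpha\mathcal{R}),\beta\mathcal{S})+d(\beta\mathcal{S},\mathcal{T})\le d(\alpha\mathcal{R},\mathcal{S})+d(\beta\mathcal{S},\mathcal{T})\le\epsilon_\alpha+\epsilon_\beta$, where the middle step applies non-expansiveness~(ii) to the honest converter $\beta$. For security against a malicious $P_i$, let $\sigma_i^\alpha,\sigma_i^\beta$ be the simulators guaranteed by the two hypotheses; I would propose the composed simulator $\sigma_i:=\sigma_i^\beta\circ\sigma_i^\alpha$ on the dishonest interface and chain $\beta_j\alpha_j\mathcal{R}\underset{\epsilon_\alpha}{\approx}\beta_j(\mathcal{S}\sigma_i^\alpha)\underset{\epsilon_\beta}{\approx}\mathcal{T}\sigma_i^\beta\sigma_i^\alpha=\mathcal{T}\sigma_i$. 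Here the first step applies non-expansiveness of $\beta_j$ to the $\alpha$-security guarantee $\alpha_j\mathcal{R}\underset{\epsilon_\alpha}{\approx}\mathcal{S}\sigma_i^\alpha$, and the second applies non-expansiveness of the converter $\sigma_i^\alpha$ (on the dishonest interface) to the $\beta$-security guarantee $\beta_j\mathcal{S}\underset{\epsilon_\beta}{\approx}\mathcal{T}\sigma_i^\beta$; the triangle inequality then yields the $\epsilon_\alpha+\epsilon_\beta$ bound. Context-insensitivity is the quickest: since $(\alpha\mid\mathsf{id})(\mathcal{R}\mid\mathcal{T})=(\alpha\mathcal{R})\mid\mathcal{T}$, the tensor form of~(ii) gives $d((\alpha\mathcal{R})\mid\mathcal{T},\mathcal{S}\mid\mathcal{T})\le d(\alpha\mathcal{R},\mathcal{S})\le\epsilon_\alpha$ for correctness, and the simulator clause follows identically after attaching $\alpha$'s simulator to the dishonest interface of the $\mathcal{R}$-component.

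The hard part is bookkeeping rather than estimation: I must track precisely which interface each converter and simulator touches, so that every use of~(ii) is a legitimate ``absorb a converter into the distinguisher'' step, and so that $\sigma_i^\beta\circ\sigma_i^\alpha$ is genuinely a single admissible converter on the dishonest interface of $\mathcal{T}$. The delicate point is the security chain, where $\beta_j(\mathcal{S}\sigma_i^\alpha)$ carries $\beta_j$ on the honest interface and $\sigma_i^\alpha$ on the dishonest one: I need the $\beta$-guarantee to survive post-composition with $\sigma_i^\alpha$, which is exactly an application of~(ii) to a converter on a \emph{different} interface than $\beta_j$ touches, and one must check the framework permits attaching simulators to filtered interfaces in this way. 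A final routine check is that the distinguisher class is closed under all the absorptions used; this is immediate here because we work with statistically unbounded distinguishers, so nothing is lost when a distinguisher internally runs $\beta$, $\sigma_i^\alpha$, or $\mathcal{T}$ (for computational security one would additionally verify that these absorbed objects are efficient).
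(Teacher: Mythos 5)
The paper does not prove this theorem at all: it is imported verbatim as Theorem~1 of the cited Abstract Cryptography paper of Maurer and Renner, so there is no in-paper proof to compare against. Your argument is correct and is essentially the canonical proof from that source: the distinguishing advantage of Definition~\ref{def:ind-res} is a pseudo-metric satisfying the triangle inequality and non-expansiveness under absorption of converters (and of internally simulated parallel resources) into the distinguisher, and both composability clauses follow by chaining, with the stacked simulator $\sigma_i^\beta$ attached innermost and $\sigma_i^\alpha$ outermost handling the security clause exactly as you describe. Your closing caveats are also the right ones -- interface bookkeeping so that $\beta_j$ and $\sigma_i^\alpha$ act on disjoint interfaces and hence commute, and closure of the (here unbounded) distinguisher class under the absorptions -- and both are unproblematic in the statistical setting the paper works in.
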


Combining the two properties presented above yields concurrent composability (the distinguishing advantage cumulates additively as well). 

\subsection{Security of SDQC Protocols}
\label{subapp:sec-sdqc}

We give here the security properties of the two SDQC constructions that we described in Section \ref{sec:sdqc}. We first introduce the Resource that these protocols construct. It allows a Client to perform a quantum computation without the Server learning anything beyond a controlled leak. The Server cannot modify the output but can force the Client to abort. We use the notation $\cptp C(\rho) = \cptp C \rho \cptp C^\dagger$.

\begin{resource}[ht]
  \caption{Secure Delegated Quantum Computation}
  \label{res:sdqc}
  \begin{algorithmic}[0]
    \STATE \textbf{Inputs:} The Client inputs quantum state $\rho$ and the classical description of a CPTP map $\cptp C$. The Server inputs two bits $(e, c)$. This latter interface is filtered and set to $0$ in the honest case.
    
    \STATE \textbf{Computation by the Resource:}
    \begin{enumerate}
    \item If $e = 1$, the Resource sends the leakage $l_{\rho}$ to the Server's output interface.
    \item If $c = 1$, the Resource outputs $\Abort$ at the Client's output interface.
    \item If $c=0$, it outputs $\cptp C(\rho)$ at the Client's output interface.
    \end{enumerate}
  \end{algorithmic}
\end{resource}

\begin{theorem}[Security of \cite{KKLM23asymmetric}]
\label{thm:sp-sdqc}
The protocol from \cite{KKLM23asymmetric} $\epsilon$-constructs the SDQC Resource \ref{res:sdqc} for $\BQP$ computations from $\abs{V}N$ instances of the Single-Plane Remote State Preparation Resource~\ref{res:sp-rsp}, while only leaking the graph $G = (V, E)$ used in the MBQC computation, with $\epsilon$ negligibly small in the number of repetitions $N$.
\end{theorem}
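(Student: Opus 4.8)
The plan is to follow the simulator-based strategy for proving AC security of stabiliser-test verification protocols, as originally carried out in \cite{KKLM23asymmetric} and placed in the general framework of \cite{KKLM22unifying,DFPR14composable}; since the statement merely recalls that result for completeness, the cleanest route is to defer to the original proof, but I would reconstruct its three pillars --- correctness, blindness, and verifiability --- in the language of Definition~\ref{def:ac-sec}. Concretely, I must exhibit a simulator $\sigma$ attached to the malicious Server interface of Resource~\ref{res:sdqc} such that the real protocol, run on top of $\abs{V}N$ copies of Resource~\ref{res:sp-rsp}, is $\epsilon$-indistinguishable from the ideal resource composed with $\sigma$, and separately verify correctness against an honest Server.

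First I would handle correctness and blindness together. For an honest Server, each computation round returns the target $\BQP$ bit up to its inherent error, and the majority vote over the $N$ rounds recovers $\cptp C(\rho)$ except with probability exponentially small in $N$ by a Chernoff bound. The simulator for the malicious case receives only the leakage $l_\rho = G$ from the ideal resource; it then emulates the UBQC interaction internally, feeding the Server measurement angles $\delta_v$. \emph{Blindness} is exactly the perfect one-time-pad property of the UBQC overlay: the random $\theta_v \in \Theta$ hides each corrected angle $\phi'_v$ and the random bit $r_v$ hides each outcome, so the Server's marginal view depends only on the graph $G$ and not on $\cptp C$, $\rho$, or the test/computation labelling of each round.

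The heart of the argument is \emph{verifiability}. I would first twirl the Server's arbitrary deviation: the randomisation supplied by the $\theta_v$ and $r_v$ acts as a quantum one-time pad, so after averaging the deviation collapses to a convex combination of Pauli attacks on the graph-state qubits. Because test rounds are statistically indistinguishable from computation rounds, any such Pauli attack strikes the two round types in the same proportion. The key design feature of \cite{KKLM23asymmetric} is that the tested stabilisers carry no $\Z$ component, so a single test round detects any deviation harmful to the computation (non-trivial modulo the stabiliser group) with a constant probability bounded away from zero. A concentration argument then ties these together: conditioned on the Client accepting --- i.e.\ on fewer than the threshold fraction of test rounds failing --- the number of corrupted computation rounds is, except with probability negligible in $N$, too small to flip the majority vote, which yields the claimed bound on the probability of accepting a corrupted outcome.

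The main obstacle is precisely this last concentration step. The difficulty is that the adversary chooses its attack adaptively and in a manner correlated across all $\abs{V}N$ qubits, so one cannot treat rounds as independent; the argument must instead exploit blindness to show that the adversary's inability to distinguish round types forces the detection events on test rounds and the corruption events on computation rounds to obey the same (adversarially chosen but symmetric) distribution, and then apply a tail bound --- as in the unified treatment of \cite{KKLM22unifying} --- to the gap between the accepted test-failure threshold and the corruption count needed to overturn the vote. Establishing the constant detection probability for the $\Z$-free stabilisers, and the precise form of the threshold that makes $\epsilon$ negligible in $N$, are the quantitative facts I would import directly from \cite{KKLM23asymmetric}.
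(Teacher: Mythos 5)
Your proposal is sound and follows essentially the same route as the source: this paper offers no proof of Theorem~\ref{thm:sp-sdqc} at all --- it is recalled ``for completeness'' from \cite{KKLM23asymmetric} --- and your reconstruction (simulator receiving only the leakage $G$, UBQC one-time-pad blindness, twirling the deviation to a convex combination of Pauli attacks, constant detection probability of the $\Z$-free stabiliser tests, and the concentration argument linking the accepted test-failure threshold to the majority vote as in \cite{KKLM22unifying}) matches exactly the strategy of the cited work as summarised in the paper's Section~2.2. Importing the quantitative facts (detection constants, threshold, negligibility of $\epsilon$ in $N$) from \cite{KKLM23asymmetric} is precisely what the authors themselves do.
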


\begin{theorem}[Security of \cite{KW17garbled}]
\label{thm:full-sdqc}
The protocol from \cite{KW17garbled} $\epsilon'$-constructs the SDQC Resource \ref{res:sdqc} for arbitrary quantum computations from $\order{c(\abs{V} + \abs{E})}$ instances of the Remote State Preparation Resource~\ref{res:rsp} for states $\{\ket{+_{k\pi/4}}\}_{k \in \qty{0, \ldots, 7}} \cup \{\ket{b}\}_{b \in \bin}$, while only leaking the graph $G = (V, E)$ used in the MBQC computation, with $\epsilon'$ negligibly small in the number of correctable errors $c$.
\end{theorem}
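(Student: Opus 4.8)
The plan is to verify the two clauses of Definition~\ref{def:ac-sec}, correctness and security against a malicious Server, for the Dotted-Triple-Graph protocol of~\cite{KW17garbled}, recast in the AC framework along the lines of~\cite{DFPR14composable}. Rather than re-deriving the protocol, I would isolate the two structural features it relies on: the \emph{blindness} inherited from the UBQC layer, and the \emph{verifiability} provided by the trap qubits together with the distance-$c$ error-correcting code in which the logical computation is embedded. The Remote State Preparation Resource~\ref{res:rsp} supplies exactly the ensemble $\{\ket{+_{k\pi/4}}\} \cup \{\ket{b}\}$ needed to prepare the rotated qubits and the dummies.

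First, correctness. With the honest filtered input $(e,c)=0$, I would check that choosing one computation, one dummy and one trap vertex in each set $P_v$, together with the prescribed typing of the added vertices, blindly isolates a dotted copy of $G$ on the computation qubits while the dummies disentangle the remaining qubits. Executing the MBQC pattern with the g-flow corrections on this subgraph produces the encoded output, and decoding the error-correcting code returns exactly $\cptp{C}(\rho)$; simultaneously every trap is a deterministic stabiliser measurement that passes with certainty. This yields $\pi \mathcal{R} \underset{\epsilon'}{\approx} \mathcal{S}$.

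Second, the simulator. For a malicious Server I would attach a simulator $\sigma$ to the filtered interface of Resource~\ref{res:sdqc}. It forwards the single permitted leakage, the graph $G$, and replaces every RSP input and every classical message by samples from their honest marginal distributions: by the UBQC one-time-padding the reported angles $\delta_v = \phi'_v + \theta_v + r_v\pi$ are uniform and independent of the secret angles, of the trap/dummy/computation placement and of the outcomes, so $\sigma$ can emit uniform angles and random RSP states without knowing the computation. The simulator then runs the Server's declared deviation internally, evaluates the trap checks, sets the abort bit accordingly, and forwards the honest ideal output only when it decides not to abort.

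The heart of the argument, and the step I expect to be hardest, is the bound $\pi_C \mathcal{R} \underset{\epsilon'}{\approx} \mathcal{S}\sigma$ with $\epsilon'$ negligible in $c$. Here I would use blindness to make the Server's deviation statistically independent of the secret placement of the traps and of the code, commute the deviation through the Clifford and measurement structure, and apply a Pauli twirl so that it suffices to bound, for each Pauli error $E$, the probability that the Client accepts while the decoded output differs from $\cptp{C}(\rho)$. By the distance of the code, any $E$ that alters the logical output must act non-trivially on at least $c$ computation positions; conditioned on the blindness-enforced uniform placement, each such position coincides with a trap, and hence triggers detection, with probability bounded below by a constant (the most efficient single-position attack being undetected only with probability $8/9$), so the most efficient undetected harmful attack succeeds with probability at most a constant strictly below one raised to a power growing linearly in $c$. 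The delicate part is controlling the correlations between the code distance and the random trap positions across the whole Dotted-Triple Graph, and verifying that the twirl preserves the blindness-based independence; once this per-position detection probability is established, summing over the convex decomposition of the deviation gives the claimed $\epsilon'$, and composing it with the correctness error closes the proof.
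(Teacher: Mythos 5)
The paper itself gives no proof of this theorem: it is recalled from~\cite{KW17garbled} (with AC security obtained via the techniques of~\cite{DFPR14composable}), and the paper's Section~2 summary of that protocol — trap/dummy/computation placement in the Dotted-Triple Graph, the $8/9$ bound on undetected single-position attacks, and the distance-$c$ code forcing any harmful deviation to touch $\Omega(c)$ positions — is exactly the argument you sketch. Your proposal, including the simulator construction, the blindness-based independence of the deviation from the secret placement, and the Pauli-twirl reduction, faithfully follows the same route as the cited original proof, so it is correct and takes essentially the paper's (i.e., the referenced) approach.
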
 
\section{Proof of Theorem \ref{thm:rsp-from-rrd}}
\label{app:rsp-from-rrd}

We prove here the following security result from Section~\ref{sec:rsp-from-rrd}.

\newcounter{tempresult}
\setcounter{tempresult}{\value{theorem}}
\setcounter{theorem}{\value{count:rsp-from-rrd}-1}
\begin{theorem}[Security of Protocol \ref{prot:rsp-from-rrd}]
  Protocol~\ref{prot:rsp-from-rrd} perfectly constructs Resource~\ref{res:sp-rsp} (SP-RSP) from Resource~\ref{res:rrd} ($\text{RR}_\text{D}$).
\end{theorem}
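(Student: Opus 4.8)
The plan is to establish both conditions of Definition~\ref{def:ac-sec} for the two-party protocol. Since the theorem claims \emph{perfect} construction, the target is $\epsilon = 0$ throughout. Correctness is the warm-up: when both parties are honest, the Receiver inputs $\ket{+}$ into the $\text{RR}_\text{D}$ Resource, which outputs $\RZ(\theta)\X^b(\ketbra{+}{+})$ for a uniformly random bit $b$. Because $\X\ket{+} = \ket{+}$, the bit-flip acts trivially, so the output is exactly $\RZ(\theta)\ket{+} = \ket{+_\theta}$ regardless of $b$. This matches the SP-RSP Resource output perfectly. The only subtlety to note is that the Sender here is the honest party providing $\theta$ while the Receiver is the recipient; correctness is immediate.

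\textbf{The main obstacle: security against a malicious Receiver.}
The substantive direction is security against a dishonest Receiver (the party who, in the target resource SP-RSP, should only receive an output and provide no input). Here I must exhibit a simulator $\sigma$ such that composing the honest Sender's converter with the real $\text{RR}_\text{D}$ Resource is indistinguishable from composing SP-RSP with $\sigma$. The challenge is that in the real protocol the malicious Receiver gets to inject an arbitrary state $\rho$ into the $\text{RR}_\text{D}$ Resource (rather than the prescribed $\ket{+}$), so the simulator must reproduce the resulting distribution while having access only to the ideal SP-RSP output $\ket{+_\theta}$, crucially \emph{without} learning $\theta$. The key structural fact is that the random bit-flip $\X^b$ acts as a symmetrisation: averaging $\RZ(\theta)\X^b(\rho)$ over $b \in \bin$ dephases $\rho$ in the $\X-\Y$ plane, and the dependence on $\theta$ then collapses. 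Concretely, I expect to write the averaged channel $\tfrac12\sum_{b}\RZ(\theta)\X^b \rho (\RZ(\theta)\X^b)^\dagger$ and show that, after the dephasing induced by $b$, the output decomposes so that the $\theta$-dependent part is exactly an $\RZ(\theta)$ rotation applied to a $\theta$-independent residual state. The simulator can then prepare that $\theta$-independent residual on its own and apply the rotation carried by the ideal $\ket{+_\theta}$.

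\textbf{Constructing and verifying the simulator.}
I would construct $\sigma$ as follows: upon receiving the Receiver's injected state $\rho$, the simulator itself samples $b \sample \bin$, computes the $\theta$-independent component of $\X^b(\rho)$, and uses the single qubit $\ket{+_\theta}$ it obtains from the ideal SP-RSP resource as the carrier of the unknown rotation $\RZ(\theta)$, ``stitching'' the known residual onto it to reproduce $\RZ(\theta)\X^b(\rho)$. The verification reduces to a channel identity: I must check that for every input $\rho$ the real output state (as a function of $\theta$, averaged over $b$) coincides exactly with the state the simulator produces. This is where the equivalence noted in the Discussion—the link to storage-and-retrieval—becomes operationally relevant: the simulator is effectively performing storage-and-retrieval of the $\RZ(\theta)$ channel using only one copy of $\ket{+_\theta}$, and perfect construction corresponds to this succeeding with probability $1$. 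I expect the algebra to hinge on the commutation relation $\X \RZ(\theta) = \RZ(-\theta)\X$ together with $\X\ket{+_\theta}\!\bra{+_\theta}\X = \ket{+_{-\theta}}\!\bra{+_{-\theta}}$, and the proof concludes once the two density operators are shown equal for all $\rho$ and all $\theta$, giving indistinguishability with zero advantage against any unbounded distinguisher.
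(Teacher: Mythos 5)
Your correctness argument and your framing of the soundness goal are on target: you correctly note that the simulator sits at the Receiver's interface of SP-RSP and must reproduce the real interaction without learning $\theta$, and you identify the key commutation relation $\X\RZ(\theta) = \RZ(-\theta)\X$ and the randomising role of the hidden bit-flip. But there is a genuine gap at the centre of the soundness proof: you never actually construct the simulator. ``Computing the $\theta$-independent component of $\X^b(\rho)$ and stitching the known residual onto the carrier $\ket{+_\theta}$'' is not a well-defined quantum operation, and the only literal reading --- extract classical data about $\rho$ (e.g.\ its Bloch component along $\X$) and prepare a fresh state combining it with $\ket{+_\theta}$ --- fails. First, the distinguisher may inject a subsystem of an entangled state $\rho_D$, and the real channel $\frac{1}{2}\sum_{b}\bigl(\RZ(\theta)\X^{b}\otimes\Id\bigr)(\rho_D)$ retains correlations with the distinguisher's reference register; any simulator that discards the injected qubit and re-prepares its reduced state from classical information destroys those correlations and is distinguishable with constant advantage (test with half of an EPR pair). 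Second, the simulator holds only a single copy of $\rho$, so it cannot ``compute'' the residual at all. The channel identity to which you defer the whole verification is therefore asserted, not proved, and the deferral hides exactly the hard step --- point (ii) from the introduction, that deviations must be mappable to secret-independent ones.

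The paper closes this gap with an explicit coherent gadget, a gate-teleportation of the unknown rotation: the simulator samples its own uniform bit $b'$, applies $\X^{b'}$ to the received qubit, performs a $\CNOT$ from that qubit onto the ideal state $\ket{+_\theta}$, measures the ancilla in the computational basis obtaining a uniformly random outcome $b$, and applies the conditional correction $\X^{b}$ to the data qubit. The induced map on the injected subsystem is
\begin{align*}
\cptp V \;=\; \X^{b}\,\RZ\bigl((-1)^{b}\theta\bigr)\,\X^{b'}\otimes\Id \;=\; \RZ(\theta)\,\X^{b\oplus b'}\otimes\Id,
\end{align*}
acting as the identity on the distinguisher's reference system, so entanglement is preserved; since $b'$ is uniform and hidden, $b\oplus b'$ is distributed exactly as the resource's bit $b$ in the real world, and the transcripts coincide perfectly. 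This circuit is precisely the ``storage-and-retrieval with probability $1$'' you allude to from the Discussion, but your proposal needed this construction (or an equivalent coherent one) to turn that intuition into a proof.
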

\setcounter{theorem}{\value{tempresult}}

\begin{proof}[Proof of Correctness]
  The state of the output qubit of the honest Receiver by the end of the protocol is:
  \begin{align*}
    \RZ(\theta) \X^b \ket{+} = \ket{+_\theta},
  \end{align*}
  which is exactly the output state of the SP-RSP Resource for input angle $\theta$.
\end{proof}

\begin{proof}[Proof of Soundness]
  To show the security of Protocol~\ref{prot:rsp-from-rrd} against a malicious Receiver, we make use of the explicit construction of Simulator~\ref{sim:rsp-from-rrd}.

  \begin{simulator}[ht]
    \caption{Against a Malicious Receiver}
    \label{sim:rsp-from-rrd}
    \begin{enumerate}	
    \item The Simulator calls the SP-RSP Resource~\ref{res:sp-rsp} and receives a state $\ket{+_\theta}$.
    \item It then emulates the behaviour of the $\text{RR}_\text{D}$ Resource:
      \begin{enumerate}
      \item It receives a single-qubit state $\rho$.
      \item It samples a random bit $b'\sample\bin$ uniformly at random, and performs the following circuit on the two received quantum states:
	\begin{equation*}
		\Qcircuit @C=1em @R=.7em @!R {
			\lstick{\rho}			& \qw	& \gate{X^{b'}}	& \ctrl{1}	& \qw 	& \qw		& \gate{X}			& \qw	& \rstick{\rho'} \qw \\
			\lstick{\ket{+_\theta}}	& \qw	& \qw			& \targ 		& \qw	& \meter		& \control \cwx \cw	& \cw	& \rstick{b} \cw
		}
	\end{equation*}
      \item The Simulator sends the resulting state $\rho'$ to the Receiver's output interface of the $\text{RR}_\text{D}$ Resource.
      \end{enumerate}
    \end{enumerate}
  \end{simulator}

  In the following, we prove that no Distinguisher can tell apart (i) the ideal world in which the Simulator has single-query oracle access to the SP-RSP Resource~\ref{res:sp-rsp}, and (ii) the real world in which the honest Sender interacts with the $\text{RR}_\text{D}$ Resource~\ref{res:rrd}.
  
  The data that is available to the Distinguisher in both worlds consists of the angle $\theta$ which it inputs at the beginning of the interaction, the quantum state which it inputs at the opposite interface, and the quantum state which the Distinguisher receives at the end. Let $\rho_D$ the internal state of the Distinguisher such that $\rho$ is the state of the subsystem sent as input to the $\text{RR}_\text{D}$ Resource (i.e.~tracing out the internal subsystem). The transcripts in both worlds can be summarised as follows:
  \begin{center}
    \begin{tabular}{cc}\toprule
\textbf{Real World} & \textbf{Ideal World} \\
      \midrule
$\theta$ & $\theta$ \\ 
$\rho_D$ & $\rho_D$ \\
$\cptp U (\rho_D)$ & $\cptp V (\rho_D)$ \\
      \bottomrule
    \end{tabular}
  \end{center}
  In Protocol~\ref{prot:rsp-from-rrd}, we have $\cptp U = \RZ(\theta) X^b \otimes \Id$ where $\Id$ acts on the reference system of the Distinguisher.
  
  In the ideal world, $\cptp V$ is obtained by following the Simulator's circuit. A simple calculation shows that
  \begin{align*}
    \cptp V = \X^b \RZ((-1)^b \theta) \X^{b'} \otimes \Id = \RZ(\theta) \X^{b' \oplus b} \otimes \Id, 
  \end{align*}
  where $\Id$ is applied on the reference system of the Distinguisher. Using the fact that the value of $b'$ is chosen uniformly at random and remains hidden from the Distinguisher, we can substitute it for $b' \oplus b$. The distributions of the transcripts in both scenarios can therefore equivalently described by:
  \begin{center}
    \begin{tabular}{cc}
      \toprule
      \textbf{Real World} & \textbf{Ideal World} \\
      \midrule
      $\theta$ & $\theta$ \\ 
      $\rho_D$ & $\rho_D$ \\
      $\RZ(\theta) \X^b \otimes \Id (\rho_D)$ \qquad & $\RZ(\theta) \X^{b'} \otimes \Id (\rho_D)$  \\
      \bottomrule
    \end{tabular}
  \end{center}
  The two transcripts follow perfectly identical distributions, therefore there is no distinguishing advantage.
\end{proof}

\section{Proof of Theorem \ref{thm:crsp-from-mrc}}
\label{app:crsp-from-mrc}

\setcounter{tempresult}{\value{theorem}}
\setcounter{theorem}{\value{count:crsp-from-mrc}-1}
\begin{theorem}[Security of Protocol \ref{prot:crsp-from-mrc}]
  Let $n$ be the security parameter used in Protocol~\ref{prot:crsp-from-mrc}. Then Protocol~\ref{prot:crsp-from-mrc} $\epsilon_n$-constructs Resource~\ref{res:c-rsp} (C-RSP) from Resource~\ref{res:m-rc} (M-RC), for $\epsilon_n = \frac{1}{2^n - 1}$.
\end{theorem}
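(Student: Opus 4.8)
The plan is to establish the two conditions of Definition~\ref{def:ac-sec} separately: correctness, which is immediate, and soundness against a malicious Receiver (the Server), for which I would exhibit an explicit simulator and bound the distinguishing advantage by $\epsilon_n = \frac{1}{2^n-1}$. For correctness I would simply propagate the honest input $\ket{\mathbf{0}}_n$ through $\cptp C_1$: the dephasing $\Z^{\mathbf{d}}$ and the linear map $\cptp U_g$ both fix $\ket{\mathbf{0}}_n$ (since $g(\mathbf{0})=\mathbf{0}$), so the last $n-1$ qubits become $\ket{\mathbf{r}}$ and the Receiver's measurement returns $\mathbf{r}'=\mathbf{r}$ deterministically; the test never aborts, and after decryption by $\cptp P_2^\dagger$ the output is exactly $\cptp C\ket{0}$.

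For soundness I would build a simulator that sits at the Receiver interface of the ideal C-RSP resource. It receives the state $\cptp C\ket{0}$ from the resource (without learning $\cptp C$), emulates M-RC by accepting the Receiver's $n$-qubit input $\rho$, samples $\mathbf{d},\mathbf{r},\cptp P_2,g$ as the honest Sender would, applies $\cptp U_g\Z^{\mathbf{d}}$ and then $\X^{\mathbf{r}}$ on the last $n-1$ qubits, but discards the resulting first qubit and substitutes the one-time-padded ideal state $\cptp P_2\cptp C\ket{0}$ in its place. It then runs the honest check on the returned $\mathbf{r}'$, revealing $\cptp P_2^\dagger$ on success and routing an abort through the resource's filtered interface on failure. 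The correctness of this simulator rests on two perfect-hiding reductions. First, since $\mathbf{d}$ is uniform and never revealed, the Receiver's view equals the average over $\mathbf{d}$, which dephases the input into a computational-basis mixture $\sum_{\mathbf{x}}\ket{\mathbf{x}}\bra{\mathbf{x}}_A\otimes\sigma^{(\mathbf{x})}_R$; I may therefore analyse each basis branch $\mathbf{x}$ independently by linearity. Second, because $\cptp P_2$ is a uniformly random Pauli revealed only after the Receiver has committed to $\mathbf{r}'$, the first qubit is maximally mixed and decoupled from everything else before the key is sent, which is exactly what lets the simulator replace the true (and unknown) $\cptp C\ket{g(\mathbf{x})_1}$ by $\cptp C\ket{0}$ without affecting any pre-key transcript.

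The quantitative heart of the argument is then to compare the post-decryption outputs. On the $\mathbf{x}=\mathbf{0}$ branch the real output is $\cptp C\ket{0}$, matching the simulator; on a nonzero branch the real output is $\cptp C\ket{g(\mathbf{x})_1}$, which agrees with the simulator whenever $g(\mathbf{x})_1=0$ and disagrees only in the ``harmful'' event that the branch both passes the test and has $g(\mathbf{x})_1=1$. I would bound this event using the two remaining secrets: $\cptp U_g$ maps any fixed nonzero $\mathbf{x}$ to a uniformly random nonzero string $g(\mathbf{x})$, while the one-time pad $\X^{\mathbf{r}}$ on the last $n-1$ qubits makes the Receiver's effective check value $m=g(\mathbf{x})_{2..n}\oplus\mathbf{r}$ statistically independent of $g(\mathbf{x})_{2..n}$. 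Passing with $g(\mathbf{x})_1=1$ forces $g(\mathbf{x})$ to equal one specific nonzero string determined by the Receiver's response $\mathbf{r}'$, and the strategy-independent identity $\sum_{w}\bigl|\{m : m\oplus\mathbf{r}'(m)=w\}\bigr| = 2^{n-1}$ shows the harmful probability per nonzero branch is exactly $\frac{1}{2^n-1}$. Summing over branches of total weight at most $1$ then yields the claimed bound $\epsilon_n=\frac{1}{2^n-1}$.

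I expect the main obstacle to be making the one-time-pad-with-delayed-key step rigorous against a fully general, entangled adversary: the malicious Receiver may keep the encrypted first qubit coherent and correlated with its reference until the key $\cptp P_2^\dagger$ arrives, so one must argue carefully (via the standard Pauli-twirl decoupling) that no pre-key operation can depend on the logical content of that qubit, and that the only residual difference between the two worlds is the harmful event already bounded above. A secondary, structural subtlety is to verify that the abort produced by the protocol can be reproduced faithfully at the ideal resource through its filtered interface, so that the simulator's abort-or-deliver decision stays perfectly synchronised with the honest Sender's test.
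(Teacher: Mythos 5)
Your proposal is correct and follows essentially the same route as the paper: the identical simulator (apply $\cptp U_g\Z^{\mathbf{d}}$ and $\X^{\mathbf{r}}$ honestly, substitute the Pauli-padded ideal qubit for the first qubit, run the honest check, and withhold the key on failure --- note that Resource~\ref{res:c-rsp} has no filtered abort interface, so the paper's simulator simply emulates the $\Abort$ message itself, which suffices because the withheld one-time-pad key leaves the output maximally mixed in both worlds), together with the same $\Z^{\mathbf{d}}$-dephasing, $\cptp U_g$-randomisation, and Pauli-twirl decoupling steps you anticipate. The only divergence is the final accounting: you bound a per-branch ``harmful event'' ($\mathbf{r}'=\mathbf{r}$ and $g(\mathbf{x})_1=1$) for an arbitrary response function $\mathbf{r}'(m)$, whereas the paper twirls the Receiver's pre-response operation on the last $n-1$ qubits into an effective $\X^{\mathbf{a}}$ attack and computes acceptance probability times post-measurement trace distance separately for $\mathbf{a}=\mathbf{0}$ and $\mathbf{a}\neq\mathbf{0}$ --- equivalent bookkeeping that yields the identical advantage $\frac{1-p_{\mathbf{0}}}{2^n-1}\leq\frac{1}{2^n-1}$.
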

\setcounter{theorem}{\value{tempresult}}

\begin{proof}[Proof of Correctness]
Let $\mathbf{0}$ be the all-zero string. The indices $n$ and $n-1$ are used to denote the number of qubits in the state when there is an ambiguity. The state received by the Receiver from the M-RC Resource is equal to:
\begin{align*}
\cptp C_1\ket{\mathbf{0}}_n &= \left(\cptp P_2\cptp C\otimes \X^{\mathbf{r}}\right)\cptp U_g\Z^{\mathbf{d}}\ket{\mathbf{0}}_n\\
&= \cptp P_2\cptp C\ket{0} \otimes \ket{\mathbf{r}}_{n-1}.
\end{align*}
Measuring the last $n-1$ qubits in the computational basis yields $\mathbf{r}$, which is the value expected by the Sender who accepts and sends back the classical description of $\cptp P_2^\dagger$.  After applying this operation, the Receiver has in its possession a single qubit in the state $\cptp C\ket{0}$, which is exactly the output of the C-RSP Resource for input Clifford $\cptp C$.
\end{proof}

\begin{proof}[Proof of Soundness]
To prove the security of Protocol~\ref{prot:crsp-from-mrc} against a malicious Receiver, we make use of Simulator~\ref{sim:crsp-from-mrc}.
  
  \begin{simulator}[ht]
    \caption{Against a Malicious Receiver}
    \label{sim:crsp-from-mrc}
    \begin{enumerate}	
    \item The Simulator calls the C-RSP Resource~\ref{res:rsp} and receives a state $\ket{\phi}$.
    \item It then samples uniformly at random bit-strings $\mathbf{d} \sample \bin^n$ and $\mathbf{r} \sample \bin^{n-1}$. It samples uniformly at random a single-qubit Pauli operator $\cptp Q_2 \sample \mathcal{P}_1$. It samples a random invertible linear function $g$ over $\mathbb{Z}_2^n$.
    \item It then emulates the behaviour of the M-RC Resource as follows:
      \begin{enumerate}
      \item It receives $n$ qubits in state $\rho$.
      \item It applies $\cptp U_g\Z^{\mathbf{d}}$ to these $n$ qubits.
      \item It then discards the first qubit, let $\rho'$ be the resulting state.
      \item It replaces this first qubit with the qubit received from the C-RSP Resource.
      \item It then applies $\cptp Q_2\otimes \X^{\mathbf{r}}$ to these $n$ qubits and outputs them.
      \end{enumerate}
    \item The Simulator then receives $\mathbf{r}' \in \bin^{n-1}$. If $\mathbf{r}' \neq \mathbf{r}$, the Simulator sends $\Abort$. Otherwise it sends the classical description of $\cptp Q_2^\dagger$ and stops.
    \end{enumerate}
  \end{simulator}

  Let $\rho_D$ the internal state of the Distinguisher such that $\rho$ is the state of the subsystem sent as input to the M-RC Resource (i.e.~tracing out the internal subsystem). For notational convenience we consider that the first qubits correspond to the Distinguisher's internal state, while the last $n$ qubits are those sent to either the M-RC Resource or the Simulator.
  
We start by looking at the first two operations that are applied to the state $\rho_D$. These are identical in both worlds and, since $\mathbf{d}$ and $g$ are unknown to the Distinguisher, yield:
\begin{align*}
\widetilde{\rho_D} &= \frac{1}{2^n G_n}\sum_{g, \mathbf{d}} (\Id \otimes \cptp U_g \Z^{\mathbf{d}})(\rho_D)\\ 
&= \frac{1}{G_n}\sum_{g} (\Id \otimes \cptp U_g) \left(\sum_{\mathbf{s} \in \bin^n} p_{\mathbf{s}} \rho_{D, \mathbf{s}}\otimes\ketbra{\mathbf{s}}_n \right)\\
&= p_{\mathbf{0}} \rho_{D, \mathbf{0}}\otimes\ketbra{\mathbf{0}}_n + \frac{1 - p_{\mathbf{0}}}{2^n - 1} \rho_{D, \bar{\mathbf{0}}}\otimes\sum_{\mathbf{s} \neq \mathbf{0}} \ketbra{\mathbf{s}}_n,\end{align*}
where $\rho_{D, \bar{\mathbf{0}}} = \frac{1}{1 - p_{\mathbf{0}}}\sum_{\mathbf{s} \neq \mathbf{0}} p_{\mathbf{s}}\rho_{D, \mathbf{s}}$ and $G_n = \prod_{i = 0}^{n-1} (2^n - 2^i)$ is the number of invertible linear functions over $\mathbb{Z}_2^n$. Summing over the first operation $\Z^{\mathbf{d}}$ performs a partial twirl of the quantum state, which transforms it into a probabilistic mixture of different basis states. The second operation $\cptp U_g$ leaves the state $\ketbra{\mathbf{0}}_n$ unaffected but summing over it perfectly randomises all remaining computational basis states.

The next step differs between the simulation and the real execution. In the real world, the M-RC Resource applies $\cptp P_2 \cptp C$ to the first qubit it received. From the point of view of the Distinguisher, it then receives the state:
\begin{align*}
\rho_{D, r} &= \frac{1}{4} \sum_{\cptp P_2 \in \mathcal{P}_1} (\Id \otimes \cptp P_2 \cptp C \otimes \Id) (\widetilde{\rho_D}) \\
&= \frac{1}{4} \sum_{\cptp P_2 \in \mathcal{P}_1} \left(p_{\mathbf{0}} \rho_{D, \mathbf{0}}\otimes\cptp P_2\cptp C(\ketbra{0}) + \frac{1 - p_{\mathbf{0}}}{2^n - 1} \rho_{D, \bar{\mathbf{0}}}\otimes \cptp P_2\cptp C(\ketbra{1})\right)\otimes\ketbra{\mathbf{0}}_{n-1} \\
&\qquad\qquad\qquad + \frac{2 - 2p_{\mathbf{0}}}{2^n - 1} \rho_{D, \bar{\mathbf{0}}} \otimes \cptp P_2 \cptp C(\Id/2) \otimes \sum_{\mathbf{s}' \neq \mathbf{0}} \ketbra{\mathbf{s}'}_{n-1}.
\end{align*}
On the other hand, the Simulator replaces the first received qubit it received with $\cptp Q_2 \cptp C (\ket{0})$. The Distinguisher then receives:
\begin{align*}
\rho_{D, i} &= \frac{1}{4} \sum_{\cptp Q_2 \in \mathcal{P}_1}\cptp Q_2 \cptp C (\ketbra{0})\otimes\Tr_{1}(\widetilde{\rho_D}) \\
&= \frac{1}{4} \sum_{\cptp Q_2 \in \mathcal{P}_1} \left(p_{\mathbf{0}} \rho_{D, \mathbf{0}} + \frac{1 - p_{\mathbf{0}}}{2^n - 1} \rho_{D, \bar{\mathbf{0}}}\right)\otimes \cptp Q_2 \cptp C (\ketbra{0})\otimes\ketbra{\mathbf{0}}_{n-1} \\
&\qquad\qquad\qquad+ \frac{2 - 2p_{\mathbf{0}}}{2^n - 1} \rho_{D, \bar{\mathbf{0}}} \otimes \cptp Q_2 \cptp C (\ketbra{0}) \otimes \sum_{\mathbf{s}' \neq \mathbf{0}} \ketbra{\mathbf{s}'}_{n-1}
\end{align*}
At this stage, by summing over $\cptp P_2$ or $\cptp Q_2$, we can see that the states $\rho_{D, i}$ and $\rho_{D, r}$ are indistinguishable since:
\begin{align*}
\frac{1}{4} \sum_{\cptp P_2 \in \mathcal{P}_1} \cptp P_2 \cptp C (\ketbra{b}) = \frac{1}{4} \sum_{\cptp P_2 \in \mathcal{P}_1} \cptp P_2 \cptp C (\Id/2) = \frac{1}{4} \sum_{\cptp Q_2 \in \mathcal{P}_1} \cptp Q_2 \cptp C (\ketbra{0}) = \Id/2
\end{align*}

We notice that the states $\rho_{D, r}$ and $\rho_{D, i}$ are both mixtures of states in which the last qubits are always in the computational basis. This means that, up to a unimportant global phase, we have $(\Id\otimes\Z^{\mathbf{d}'})(\rho_{D, r}) = \rho_{D, r}$ and similarly for $\rho_{D, i}$, for any $\mathbf{d}' \in \bin^{n - 1}$. The Simulator and Sender both apply $\X^{\mathbf{r}}$ for a uniformly random $\mathbf{r}$. The state in the hands of the Distinguisher after it receives the $n$ last qubits can therefore be written as:
\begin{align*}
\frac{1}{4^{n-1}} \sum_{\mathbf{r}, \mathbf{d}'} (\Id \otimes \X^{\mathbf{r}}\Z^{\mathbf{d}'})(\rho_{D, r}) = \frac{1}{4^{n-1}} \sum_{\mathbf{r}, \mathbf{d}'} (\Id \otimes \X^{\mathbf{r}}\Z^{\mathbf{d}'})(\rho_{D, i})
\end{align*}
The Distinguisher then applies a unitary $\cptp U$ of its choice to the returned state and its work register and, without loss of generality, measures the last $n-1$ returned qubits in the computational basis. It sends back the measurement outcomes and receives either an abort message or the decryption operation.

Aborting or not only depends on whether the value returned by the Distinguisher $\mathbf{r}'$ is equal to $\mathbf{r}$. The value $\mathbf{r}$ has been sampled according to the same distribution in the real and ideal cases, and the states are identical from the point of view of the Distinguisher at the moment when it is supposed to send the message $\mathbf{r}'$. It is therefore impossible for the Distinguisher to force an abort more often in one scenario since the probability that these two values are equal is the same in both settings. The state in the case of an abort is identical since both the Simulator and the Sender send $\Abort$ and the decryption key is not revealed -- in this case the state of the unmeasured qubit is perfectly mixed in both cases as shown above.

We now analyse the state in the case where there has not been an abort. Up to renormalisation, this corresponds by applying the projector $\dyad{\acc}{\mathbf{0}}_{n-1}\X^{\mathbf{r}}$ to the last $n-1$ qubits. Notice that it is also possible to add the operator $\Z^{\mathbf{d}'}$ to the projector since it is absorbed by $\bra{\mathbf{0}}_{n-1}$. In the real case this yields:
\begin{align*}
\frac{1}{4^{n-1}} \sum_{\mathbf{r}, \mathbf{d}'} \dyad{\acc}{\mathbf{0}}_{n-1} (\Id \otimes \Z^{\mathbf{d}'}\X^{\mathbf{r}})\cptp U (\Id \otimes \X^{\mathbf{r}}\Z^{\mathbf{d}'})(\rho_{D, r})
\end{align*}
The same computation can be performed in the ideal case by replacing $\rho_{D, r}$ with $\rho_{D, i}$. In both settings a Pauli twirl over the last $n-1$ qubits is applied to the operation $\cptp U$ performed by the Distinguisher. The result is that this operation can be seen therefore as a convex combination of Pauli operators on the last $n-1$ qubits tensored with a unitary on the rest of the state:
\begin{align*}
\frac{1}{4^{n-1}} \sum_{\mathbf{r}, \mathbf{d}'} (\Id \otimes \Z^{\mathbf{d}'}\X^{\mathbf{r}})\cptp U (\Id \otimes \X^{\mathbf{r}}\Z^{\mathbf{d}'})(\rho_{D, r}) = \sum_{\cptp P \in \mathcal{P}_{n-1}} p_{\cptp P} \cptp U_{\cptp P}\otimes \cptp P(\rho_{D, r}),
\end{align*}
for a probability distribution $\{p_{\cptp P}\}$ over the $n-1$ qubit Pauli group, and $\cptp U_{\cptp P}$ a unitary acting on the unmeasured qubit and the Distinguisher's internal register. Injecting this expression for the Distinguisher's operation back into the previous equation, we notice that the Pauli can be in fact reduced to simply an $\X^{\mathbf{a}}$ operation, since any $\Z$ component is absorbed by the computational basis measurement. The unitary on the unmeasured qubits can be performed after the measurement and combined with any future operation.

Overall the strategy of the Distinguisher can be expressed by a joint probability distribution over (i) a computational basis input state $\ket{\mathbf{s}}_n$,  (ii) a state for its internal register, (iii) an operation $\X^{\mathbf{a}}$ over $n-1$ qubits and (iv) a unitary $\cptp U$ on the unmeasured qubit and internal state after receiving the correction $\cptp P_2^\dagger$ or $\cptp Q_2^\dagger$ if the result is accepted.

Moreover, this final unitary can without loss of generality be written as $\widetilde{\cptp U} \cptp P^\dagger$, where $\cptp P^\dagger$ is either $\cptp P_2^\dagger$ or $\cptp Q_2^\dagger$. Since $\widetilde{\cptp U}$ acts only on the output state of either the simulation or the real execution and the Distinguisher's internal register, it can be incorporated into the operation used by the Distinguisher to produce its decision bit. Then the internal state is not acted upon by any operation before the final distinguishing decision, we can therefore omit it as well from the description of the state since it does not yield any distinguishing advantage.

In the end, the strategy of the Distinguisher is reduced to simply a joint probability distribution over $\ket{\mathbf{s}}_n$ and $\X^{\mathbf{a}}$. For a fixed value of $\mathbf{a}$, the states in the real and ideal case without the projection $\dyad{\acc}{\mathbf{0}}_{n-1}$ can then be written as:
\begin{align*}
\rho_{\mathbf{a}, r} &= \left(p_{\mathbf{0}} \cptp C(\ketbra{0}) + \frac{1 - p_{\mathbf{0}}}{2^n - 1} \cptp C(\ketbra{1})\right)\otimes\X^{\mathbf{a}}(\ketbra{\mathbf{0}}) + \frac{2 - 2p_{\mathbf{0}}}{2^n - 1} \cptp C(\Id/2) \otimes \sum_{\mathbf{s}' \neq \mathbf{0}} \X^{\mathbf{a}}(\ketbra{\mathbf{s}'})\\
&= \left(p_{\mathbf{0}} \cptp C(\ketbra{0}) + \frac{1 - p_{\mathbf{0}}}{2^n - 1} \cptp C(\ketbra{1})\right)\otimes\ketbra{\mathbf{a}} + \frac{2 - 2p_{\mathbf{0}}}{2^n - 1} \cptp C(\Id/2) \otimes \sum_{\mathbf{s}' \neq \mathbf{a}} \ketbra{\mathbf{s}'}\\
\rho_{\mathbf{a}, i} &= \left(p_{\mathbf{0}} + \frac{1 - p_{\mathbf{0}}}{2^n - 1} \right) \cptp C (\ketbra{0})\otimes\X^{\mathbf{a}}(\ketbra{\mathbf{0}}) + \frac{2 - 2p_{\mathbf{0}}}{2^n - 1} \cptp C (\ketbra{0}) \otimes \sum_{\mathbf{s}' \neq \mathbf{0}} \X^{\mathbf{a}}(\ketbra{\mathbf{s}'})\\
&= \cptp C (\ketbra{0}) \otimes \left(\left(p_{\mathbf{0}} + \frac{1 - p_{\mathbf{0}}}{2^n - 1} \right)\ketbra{\mathbf{a}} + \frac{2 - 2p_{\mathbf{0}}}{2^n - 1} \sum_{\mathbf{s}' \neq \mathbf{a}} \ketbra{\mathbf{s}'}\right)
\end{align*}

If $\mathbf{a} = \mathbf{0}$, then the probability of obtaining the correct measurement outcome is $p_1 = p_{\mathbf{0}}+\frac{1-p_{\mathbf{0}}}{2^n - 1}$ and the post-measurement states are (in the real and ideal case respectively):
\begin{align*}
\rho_{\mathbf{0}, r} &= \frac{1}{p_1}\left(p_{\mathbf{0}} \cptp C(\ketbra{0}) + \frac{1 - p_{\mathbf{0}}}{2^n - 1} \cptp C(\ketbra{1})\right)\otimes\ketbra{\acc}\\
\rho_{\mathbf{0}, i} &= \cptp C (\ketbra{0}) \otimes \ketbra{\acc}
\end{align*}
The trace distance between the post-measurement states in this case is equal to $\delta_1 = \frac{1-p_{\mathbf{0}}}{2^n-2p_{\mathbf{0}}+1}$.

If $\mathbf{a} \neq \mathbf{0}$, then the probability of obtaining the correct measurement outcome is $p_2 = \frac{2 - 2p_{\mathbf{0}}}{2^n - 1}$ and the post-measurement states are (again in the real and ideal case respectively):
\begin{align*}
\rho_{\mathbf{a}, r} &= \cptp C(\Id/2) \otimes\ketbra{\acc}\\
\rho_{\mathbf{a}, i} &= \cptp C (\ketbra{0}) \otimes \ketbra{\acc}
\end{align*}
Here the trace distance is $\delta_2 = 1/2$.

Combining the correct measurement probability and post-measurement distance we get that the overall trace distance is in both cases $p_1\delta_1 = p_2\delta_2 = \frac{1 - p_{\mathbf{0}}}{2^n - 1}$. This is maximised by taking $p_{\mathbf{0}} = 0$, in which case we get that the distinguishing advantage between the real execution and the simulation is $\epsilon_n = \frac{1}{2^n - 1}$, which concludes the proof.
\end{proof}

\section{Proof of Theorem \ref{thm:rsp-from-crsp-and-ru}}
\label{app:rsp-from-crsp-and-ru}

\setcounter{tempresult}{\value{theorem}}
\setcounter{theorem}{\value{count:rsp-from-crsp-and-ru}-1}
\begin{theorem}[Security of Protocol \ref{prot:rsp-from-crsp-and-ru}]
  Protocol~\ref{prot:rsp-from-crsp-and-ru} perfectly constructs Resource~\ref{res:rsp} (RSP) from Resource~\ref{res:c-rsp} (C-RSP) and Resource~\ref{res:ru} (RU).
\end{theorem}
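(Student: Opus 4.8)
The plan is to follow the two-part template of the earlier appendix proofs: first verify correctness for honest players, then exhibit an explicit simulator against a malicious Receiver and show that the real and ideal transcripts are \emph{identically} distributed, so that the construction is perfect ($\epsilon = 0$). Security against a malicious Sender is vacuous here, since the Receiver feeds no secret into Resource~\ref{res:rsp} and the simulator need only forward inputs. Correctness itself is a one-line computation: composing the three honest transformations, the Receiver's output is $\cptp U_2\cptp U_1\cptp C\ket{0} = \cptp U\cptp C^\dagger\cptp U_1^\dagger\cptp U_1\cptp C\ket{0} = \cptp U\ket{0}$, which is exactly what the RSP Resource outputs on input $\cptp U$.

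For soundness I would build a simulator with one-shot access to the ideal RSP resource, so that it holds a single copy of $\cptp U\ket{0}$ but \emph{not} the classical description of $\cptp U$, and have it emulate the three interfaces the real protocol exposes: the C-RSP output, the RU call, and the classical message. The simulator samples a uniform Clifford $\cptp C$ and a Haar-random unitary, sends $\cptp C\ket{0}$ as the C-RSP output, routes the ideal state $\cptp U\ket{0}$ back to the Receiver \emph{through} the emulated RU interface (so an honestly behaving Receiver recovers $\cptp U\ket{0}$), and sends a fresh Haar-random unitary as the correction message. The argument then rests on two steps, which I would make precise with a transcript table as in the proof of Theorem~\ref{thm:rsp-from-rrd}. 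The first is a one-time-pad step: because $\cptp U_1$ is Haar-random and independent, $\cptp U_2 = \cptp U\cptp C^\dagger\cptp U_1^\dagger$ is Haar-distributed and independent of $\cptp C$, so by invariance of the Haar measure I may re-sample $(\cptp U_1,\cptp U_2)$ as $(\cptp U_1 = \cptp U_2^\dagger\cptp U\cptp C^\dagger,\ \cptp U_2 \text{ Haar})$; since $\cptp U_2$ is known to the Receiver it can be folded into the Receiver's final operations and carries no distinguishing information. The second is a twirl step: after folding out the message, all $\cptp U$-dependence of the Receiver's state enters only through a conjugation $\cptp U(\cdot)\cptp U^\dagger$ of a $\cptp U$-independent operator built from the unknown $\cptp C$, and averaging over $\cptp C$ (the single-qubit Clifford group is a unitary $2$-design, so $\mathbb{E}_{\cptp C}[\cptp C^\dagger P\cptp C]=0$ for every non-identity Pauli $P$) collapses this to the single operator $\cptp U\Z\cptp U^\dagger = 2\,\cptp U\ket{0}\bra{0}\cptp U^\dagger - \I$. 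Hence the Receiver's entire view is a fixed, $\cptp U$-independent CPTP image of $\cptp U\ket{0}\bra{0}\cptp U^\dagger$, which is precisely what the simulator can prepare from its single ideal copy, and I would conclude that the two views coincide exactly.

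The main obstacle is that the classical message $\cptp U_2$ literally contains $\cptp U$, yet the simulator never learns $\cptp U$ and must nonetheless make an honestly behaving Receiver output exactly $\cptp U\ket{0}$ while also matching the real transcript against a Receiver that feeds arbitrary states into RU, possibly entangled with its private register. The delicate point is therefore to check that the Haar one-time pad on $\cptp U_2$ together with the Clifford twirl supplied by the secret $\cptp C$ really does wash out every trace of $\cptp U$ beyond $\cptp U\ket{0}$, even for entangling inputs; concretely I expect the crux to be the Choi-state computation showing $\mathbb{E}_{\cptp C}\bigl[\cptp C\ket{0}\bra{0}\cptp C^\dagger \otimes (\cptp C^\dagger P\cptp C)\bigr] \propto P \otimes \Z$, so that conjugating the second factor by $\cptp U$ leaves a state determined solely by $\cptp U\ket{0}$. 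Routing $\cptp U\ket{0}$ back through the quantum RU interface, rather than trying to encode it in the classical message, is what makes the injection feasible, while the $2$-design property of the Clifford group is the quantitative fact that makes the twirl exact and hence the simulation perfect.
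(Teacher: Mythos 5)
Your correctness argument and your closing twirl computation are in the right spirit, but your simulator is broken at a structural level: you inject the ideal copy of $\cptp U\ket{0}$ at the \emph{RU} interface, discarding whatever state the malicious Receiver actually inputs there. The real Resource~\ref{res:ru} applies a unitary to the Receiver's input, and therefore preserves any entanglement between that input and the Receiver's private register, as well as any correlation with the C-RSP output; a discard-and-replace simulation destroys both. Concretely, a Distinguisher that keeps the C-RSP state aside and feeds one half of an EPR pair into RU ends up, in the real world, with a globally pure state in which the RU output is maximally entangled with its reference qubit (and, after averaging over the secret Clifford $\cptp C$, still carries $P\otimes P^T$-type cross-correlations between the C-RSP register and the reference --- exactly the terms your own Choi identity $\mathbb{E}_{\cptp C}\bigl[\cptp C\ketbra{0}\cptp C^\dagger \otimes \cptp C^\dagger P\cptp C\bigr]$ produces for $P\neq\Id$). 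In your ideal world the RU output is a fixed pure state, product with everything else, and the C-RSP register is an independent $\Id/2$; a Bell measurement or purity test distinguishes the two with constant advantage. Note also that your reduced real-world picture is a box applying $\cptp U\cptp C^\dagger$ on an arbitrary input, which is \emph{not} a replacement channel, so the statement that the view is ``a fixed CPTP image of $\cptp U\ketbra{0}\cptp U^\dagger$'' cannot rescue this particular simulator: $\cptp U$-independence of an averaged state is weaker than interactive simulatability across the three interfaces.

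The paper's Simulator~\ref{sim:rsp-from-crsp-and-ru} avoids this by injecting the ideal state at the \emph{C-RSP} interface --- the one interface with no Receiver input to preserve --- masked as $\cptp V_1\ket{\phi}$ for Haar-random $\cptp V_1$, while implementing the RU interface \emph{honestly} with a fresh Haar-random $\cptp V_2$ applied to whatever the Receiver sends (so entanglement survives), and issuing the correction $\cptp V_1^\dagger\cptp V_2^\dagger$. After the Haar-invariance substitutions you correctly anticipate, the problem then reduces to distinguishing a uniformly random Clifford $\cptp C$ from a Haar-random $\cptp W_1$, given one copy of the rotated $\ket{0}$ state and single-use black-box access to the inverse; the paper closes this by showing the Clifford twirl of the Distinguisher's unitary yields a Pauli channel with $p_\X = p_\Y = p_\Z$, and that decomposing the Haar integral over cosets of the Clifford group (the $2$-design fact you invoke) yields the identical channel. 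So the twirl machinery you reach for is indeed the crux --- but it must be attached to a simulator that touches only the preparation interface, not the unitary-application interface.
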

\setcounter{theorem}{\value{tempresult}}

\begin{proof}[Proof of Correctness]
  The single-qubit state which is output by the Receiver at the end of the protocol is:
  \begin{align*}
   \cptp U_2 \cptp U_1 \cptp C \ket{0} =\cptp U \cptp C^\dagger \cptp U_1^\dagger \cptp U_1 \cptp C \ket{0} = \cptp U \ket{0},
  \end{align*}
  which is exactly the output of the RSP Resource for input unitary $\cptp U$.
\end{proof}

\begin{proof}[Proof of Soundness]
  To prove the security of Protocol~\ref{prot:rsp-from-crsp-and-ru} against a malicious Receiver, we make use of Simulator~\ref{sim:rsp-from-crsp-and-ru}.
  
  \begin{simulator}[ht]
    \caption{Against a Malicious Receiver}
    \label{sim:rsp-from-crsp-and-ru}
    \begin{enumerate}	
    \item The Simulator calls the RSP Resource~\ref{res:rsp} and receives a state $\ket{\phi}$.
    \item It then samples a single-qubit unitary $\cptp V_1$ from the Haar measure, and emulates the C-RSP resource by sending $\cptp V_1 \ket{\phi}$.
    \item It then emulates the behaviour of the RU Resource as follows:
      \begin{enumerate}
      \item It receives a single qubit in state $\rho$.
      \item It samples a single-qubit unitary $\cptp V_2$ from the Haar measure, and returns $\cptp V_2 (\rho)$.
      \end{enumerate}
    \item The Simulator then computes $\cptp V_3 = \cptp V_1^\dagger \cptp V_2^\dagger$, and sends the classical description of $\cptp V_3$.
    \end{enumerate}
  \end{simulator}

  Let $\rho_D$ the internal state of the Distinguisher such that $\rho$ is the state of the subsystem sent as input to the RU Resource (i.e.~tracing out the internal subsystem). The transcripts available to the Distinguisher in the real and ideal worlds are:
  \begin{center}
    \begin{tabular}{cc}
      \toprule
      \textbf{Real World} & \textbf{Ideal World} \\
      \midrule
      $\cptp U$ & $\cptp U$ \\ 
      $\cptp C\ket{0}$ & $\cptp V_1 \cptp U \ket{0}$ \\
      $\rho_D$ & $\rho_D$ \\
      $\cptp U_1\otimes\Id (\rho_D)$ & $\cptp V_2\otimes\Id (\rho_D)$ \\
      $\cptp U \cptp C^\dagger \cptp U_1^\dagger$ & $\cptp V_1^\dagger \cptp V_2^\dagger$ \\
      \bottomrule
    \end{tabular}
  \end{center}
  Since $\cptp V_1$ is drawn from the Haar measure, we do not change the distribution of the ideal world transcript by substituting $\cptp V_1 \cptp U$ with a Haar-random $\cptp W_1$, which yields:
  \begin{center}
    \begin{tabular}{cc}
      \toprule
      \textbf{Real World} & \textbf{Ideal World} \\
      \midrule
      $\cptp U$ & $\cptp U$ \\ 
      $\cptp C\ket{0}$ & $\cptp W_1 \ket{0}$ \\
      $\rho_D$ & $\rho_D$ \\
      $\cptp U_1\otimes\Id (\rho_D)$ & $\cptp V_2\otimes\Id (\rho_D)$ \\
      $\cptp U \cptp C^\dagger \cptp U_1^\dagger$ & $\cptp U \cptp W_1^\dagger \cptp V_2^\dagger$ \\
      \bottomrule
    \end{tabular}
  \end{center}
  Applying the unitary operation $\cptp U^\dagger$ -- whose classical description is available to the Distinguisher -- to the last row, we obtain:
  \begin{center}
    \begin{tabular}{cc}
      \toprule
      \textbf{Real World} & \textbf{Ideal World} \\
      \midrule
      $\cptp C\ket{0}$ & $\cptp W_1 \ket{0}$ \\
      $\rho_D$ & $\rho_D$ \\
      $\cptp U_1\otimes\Id (\rho_D)$ & $\cptp V_2\otimes\Id (\rho_D)$ \\
      $\cptp C^\dagger \cptp U_1^\dagger$ & $\cptp W_1^\dagger \cptp V_2^\dagger$ \\
      \bottomrule
    \end{tabular}
  \end{center}
  Note that one can omit the description of $\cptp U$ at this point, as it is entirely independent of all other data contained in the transcripts and therefore yields no distinguishing advantage. Finally, the Distinguisher can apply the unitary operation described by the last row to the quantum state in the third row to obtain:
  \begin{center}
    \begin{tabular}{cc}
      \toprule
      \textbf{Real world} & \textbf{Ideal world} \\
      \midrule
      $\cptp C\ket{0}$ & $\cptp W_1 \ket{0}$ \\
      $\rho_D$ & $\rho_D$ \\
      $\cptp C^\dagger\otimes\Id (\rho_D)$ & $\cptp W_1^\dagger\otimes\Id (\rho_D)$ \\
      $\cptp C^\dagger \cptp U_1^\dagger$ & $\cptp W_1^\dagger \cptp V_2^\dagger$ \\
      \bottomrule
    \end{tabular}
  \end{center}
  As $\cptp U_1$ (in the real world) and $\cptp V_2$ (in the ideal world) are sampled from the Haar measure, the last entries are in fact entirely independent from the rest of the transcripts and identically distributed, and can therefore be removed:
  \begin{center}
    \begin{tabular}{cc}
      \toprule
      \textbf{Real world} & \textbf{Ideal world} \\
      \midrule
      $\cptp C\ket{0}$ & $\cptp W_1 \ket{0}$ \\
      $\rho_D$ & $\rho_D$ \\
      $\cptp C^\dagger\otimes\Id (\rho_D)$ & $\cptp W_1^\dagger\otimes\Id (\rho_D)$ \\
      \bottomrule
    \end{tabular}
  \end{center}
  At this point, distinguishing the two worlds is equivalent to distinguishing an operation drawn uniformly at random from the single-qubit Clifford group ($\cptp C$) from an operation drawn from the single-qubit Haar measure ($\cptp W_1$), when given single-use access to a black-box implementing its inverse and one copy of the quantum state resulting from its application to the $\ket{0}$-state.
  
  We now look more precisely at the structure of state $\rho_D$. In both cases the Distinguisher has in its possession an auxiliary state $\rho_{aux}$, receives either $\cptp C\ket{0}$ or $\cptp W_1 \ket{0}$, applies an operation $\cptp U_D$ of its choice -- which can without loss of generality be considered as a unitary -- to both the received state and its auxiliary state, and sends -- again without loss of generality -- the first qubit to a single-use box which applies either $\cptp C^\dagger$ or $\cptp W_1^\dagger$.
  
  In the real case, denoting $\mathcal{C}_1$ the single-qubit Clifford group, the state that the Distinguisher has in its possession after receiving the output of this box is:
  \begin{align*}
  \frac{1}{24}\sum_{\cptp C \in \mathcal{C}_1} (\cptp C^\dagger\otimes\Id)\cptp U_D(\cptp C\otimes\Id) (\ketbra{0}\otimes\rho_{aux})
  \end{align*}
  Summing over $\cptp C$ performs a Clifford twirl over the action of the unitary $\cptp U_D$ on the first qubit. We can then write:
  \begin{align*}
  \frac{1}{24}\sum_{\cptp C \in \mathcal{C}_1} (\cptp C^\dagger\otimes\Id)\cptp U_D(\cptp C\otimes\Id)(\ketbra{0}\otimes\rho_{aux}) = \sum_{\cptp P \in \mathcal{P}_1} p_{\cptp P} (\cptp P\otimes \cptp U_{\cptp P})(\ketbra{0}\otimes\rho_{aux}),
  \end{align*}
  where  the probabilities satisfy $p_{\X} = p_{\Y} = p_{\Z} = p$ and $p_{\Id} = 1 - 3p$ and $\cptp U_{\cptp P}$ are unitaries acting on the auxiliary state with $\cptp U_\X = \cptp U_\Y = \cptp U_\Z$.
  
  In the ideal case the Distinguisher instead gets:
  \begin{align*}
  \int_{\cptp W_1(d)} (\cptp W_1^\dagger\otimes\Id)\cptp U_D(\cptp W_1\otimes\Id) (\ketbra{0}\otimes\rho_{aux}) d\cptp W_1
  \end{align*}
  We define the equivalence relation $\sim$ on the set of single-qubit unitaries by $\cptp U \sim \cptp V$ if there exists a single-qubit Clifford $\cptp C \in \mathcal{C}_1$ such that $\cptp U = \cptp C \cptp V$. We can then decompose the integral above as:
  \begin{align*}
  \frac{1}{24}\sum_{\cptp C \in \mathcal{C}_1} \int_{\widetilde{\cptp W_1}(d)} (\widetilde{\cptp W_1}^\dagger\otimes\Id)(\cptp C^\dagger\otimes\Id)\cptp U_D (\cptp C\otimes\Id)(\widetilde{\cptp W_1}\otimes\Id) (\ketbra{0}\otimes\rho_{aux}) d\widetilde{\cptp W_1},
  \end{align*}
 where the integral is performed over the equivalence classes of the relation $\sim$. Permuting the sum and integral we get:
  \begin{align*}
  \int_{\widetilde{\cptp W_1}(d)} (\widetilde{\cptp W_1}^\dagger\otimes\Id) \left(\frac{1}{24}\sum_{\cptp C \in \mathcal{C}_1} (\cptp C^\dagger\otimes\Id)\cptp U_D (\cptp C\otimes\Id) \right) (\widetilde{\cptp W_1}\otimes\Id) (\ketbra{0}\otimes\rho_{aux}) d\widetilde{\cptp W_1},
  \end{align*}
  We can then use the Clifford twirl again to obtain:
  \begin{align*}
  \sum_{\cptp P \in \mathcal{P}_1} p_{\cptp P} \int_{\widetilde{\cptp W_1}(d)} (\widetilde{\cptp W_1}^\dagger\otimes\Id)(\cptp P\otimes \cptp U_{\cptp P})(\widetilde{\cptp W_1}\otimes\Id) (\ketbra{0}\otimes\rho_{aux}) d\widetilde{\cptp W_1},
  \end{align*}
  for the same definitions of $p_{\cptp P}$ and $\cptp U_{\cptp P}$. This is a unitary twirl of the depolarising channel over the equivalence classes for relation $\sim$. Applying a unitary $\cptp U$ followed by the depolarising channel and then $\cptp U^\dagger$ has exactly the same effect as simply applying the depolarising channel with the same parameter. Hence the state is equal to:
    \begin{align*}
  \sum_{\cptp P \in \mathcal{P}_1} p_{\cptp P} (\cptp P\otimes \cptp U_{\cptp P})(\ketbra{0}\otimes\rho_{aux}),
  \end{align*}
  which is the same as the real case, thus concluding the proof.
\end{proof}

\section{RSP of Arbitrary Single-qubit States with Selective NOT from Remote Unitary}
\label{app:non-verif}

The ability to prepare arbitrary states on the Bloch sphere is the perfect Resource for SDQC since it allows the Client to prepare not only state in the $\X - \Y$ plane but also states in the computation basis. However, as shown in Section~\ref{sec:arbitrary-states}, this resource is much harder to construct from simple operations if we assume that measurement or state preparation devices cannot be trusted. We present here Resource~\ref{res:rsp-sn} which gives more power to the Adversary in the sense that it can decide to instead prepare the state which is orthogonal to the one desired by the honest User.

\begin{resource}[ht]
  \caption{Remote State Preparation with Selective NOT (RSP-SN)}
  \label{res:rsp-sn}
  \begin{algorithmic}[0]
    \STATE \textbf{Inputs:} The Sender inputs the classical description of a single-qubit unitary $\cptp U$. The Receiver inputs a bit $b \in \bin$. This latter interface is filtered and set to $0$ in the honest case.
    \STATE \textbf{Computation by the Resource:} The Resource prepares and sends a qubit in state $\cptp U\ket{b}$ to the Receiver.
  \end{algorithmic}
\end{resource}

\begin{figure}[t]
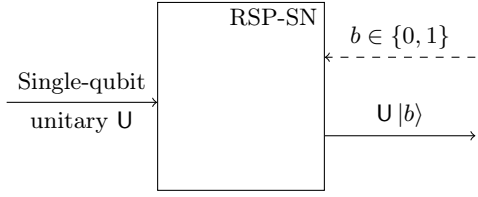

  \centering
  \begin{bbrenv}{rspsn}
    \begin{bbrbox}[name=RSP-SN, minheight=2.5cm]\end{bbrbox}
    \bbrmsgspace{12mm}
    \bbrmsgto{top={Single-qubit},bottom={unitary $\cptp U$},length=2cm}
    \bbrqryspace{6mm}
    \bbrqryfrom{top={$b \in \bin$}, length=2cm, edgestyle={dashed}}
    \bbrqryspace{5mm}
    \bbrqryto{top={$\cptp U\ket{b}$}, length=2cm}
  \end{bbrenv}
  \caption{Ideal Resource~\ref{res:rsp-sn} (RSP-SN). The dashed arrow represents the interface that is accessible only to malicious receivers and that is filtered in the honest case.}
  \label{fig:rsp-sn}
\end{figure}

Protocol~\ref{prot:rsp-sn-from-ru} then shows how to construct the RSP-SN Resource from the Remote Unitary Resource. The Receiver sends a $\ket{0}$ state, the Client applies a random $\Z^{d}$ dephasing and the unitary of its choice and sends it back. This is same randomisation technique as in the previous protocol, only $\Z$ replaces $\X$ since $\ket{0}$ replaces $\ket{+}$.

\begin{protocol}[ht]
  \caption{RSP-SN from RU}
  \label{prot:rsp-sn-from-ru}
  \begin{algorithmic} [0]
    \STATE \textbf{Input:} The Sender inputs the classical description of a single-qubit unitary $\cptp U$.
    \STATE \textbf{Protocol:}
    \begin{enumerate}
    \item The Sender samples a bit $d \sample \bin$ uniformly at random.
    \item The Sender and Receiver call the RU Resource~\ref{res:ru}:
    \begin{itemize}
    	\item The Sender inputs the unitary $\cptp U \Z^d$.
    	\item The Receiver inputs a qubit in the state $\ket{0}$.
    	\item The Resource returns a single qubit to the Receiver who sets it as its output.
    \end{itemize}
    \end{enumerate}
  \end{algorithmic}
\end{protocol}

The security of Protocol~\ref{prot:rsp-sn-from-ru} is given in the following Theorem~\ref{thm:rsp-sn-from-ru}.

\begin{theorem}[Security of Protocol \ref{prot:rsp-sn-from-ru}]
\label{thm:rsp-sn-from-ru}
Protocol~\ref{prot:rsp-sn-from-ru} perfectly constructs Resource~\ref{res:rsp-sn} (RSP-SN) from Resource~\ref{res:ru} (RU).
\end{theorem}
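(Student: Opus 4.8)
The plan is to establish both correctness and soundness in the standard AC style, mirroring the proof of Theorem~\ref{thm:rsp-from-rrd} but working in the computational basis rather than in the $\X$--$\Y$ plane. For correctness, I would observe that when the honest Receiver inputs $\ket{0}$, the RU Resource returns $\cptp U\Z^d\ket{0} = \cptp U\ket{0}$ independently of $d$, since $\Z^d\ket{0} = \ket{0}$. This is exactly the output of RSP-SN on Sender input $\cptp U$ with the filtered bit set to its honest value $0$, so the two match perfectly.

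For soundness against a malicious Receiver, I would build a simulator with single-query access to RSP-SN that emulates the RU interface. Upon receiving the Receiver's input qubit, the simulator measures it in the computational basis, obtaining an outcome $b$; it then queries RSP-SN with the filtered bit set to $b$, receives $\cptp U\ket{b}$, and returns this as the RU output. The analysis rests on a single observation: averaging the real-world operation $(\cptp U\Z^d)\otimes\Id$ over the secret uniform bit $d$ factors as $\cptp U\otimes\Id$ composed with the dephasing channel $\tfrac{1}{2}\sum_d(\Z^d\otimes\Id)(\cdot)(\Z^d\otimes\Id)$ acting on the Receiver's qubit.

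I would then compare the two worlds on the joint state $\rho_D$ held by the Distinguisher (its private reference register together with the qubit fed into the resource). Writing $\rho_D = \sum_{i,j}\ketbra{i}{j}\otimes B_{ij}$, the real-world dephasing annihilates the off-diagonal terms and leaves $\ketbra{0}\otimes B_{00} + \ketbra{1}\otimes B_{11}$; applying $\cptp U$ then yields $\cptp U(\ketbra{0})\otimes B_{00} + \cptp U(\ketbra{1})\otimes B_{11}$. In the ideal world, the simulator's computational-basis measurement produces outcome $b$ with probability $\Tr(B_{bb})$, collapses the reference to $B_{bb}/\Tr(B_{bb})$, and returns $\cptp U\ket{b}$; summing over the two outcomes reproduces exactly the same joint state. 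As the Sender input $\cptp U$ is identical in both worlds and the Receiver never learns $d$ through RU, the two views coincide perfectly and there is no distinguishing advantage.

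The conceptual crux, and the step I expect to require the most care, is recognising that the secret $\Z$-dephasing is operationally identical to a measure-and-forget in the computational basis, and that RSP-SN is precisely the resource that lets the simulator upgrade this to a measure-and-remember: the filtered bit $b$ is the mechanism by which the ideal functionality grants the adversary the ability to select which of $\cptp U\ket{0}, \cptp U\ket{1}$ is produced, matching the classical value that the real-world dephasing has already fixed. The one place where the argument must be stated carefully rather than waved through is verifying that measuring only the single transmitted qubit -- to which the simulator has access -- reproduces the reference correlations generated by the real-world dephasing, which acts on a subsystem entangled with the Distinguisher's private register.
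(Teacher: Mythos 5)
Your proposal is correct and follows essentially the same route as the paper: the simulator measures the received qubit in the computational basis and queries the filtered interface of RSP-SN with the outcome $b$, and the analysis identifies the averaged $\Z^d$ with a dephasing channel (your block decomposition $\rho_D = \sum_{i,j}\ketbra{i}{j}\otimes B_{ij}$ is the same computation as the paper's partial-twirl expression $p_0\ketbra{0}\otimes\rho_{D,0} + (1-p_0)\ketbra{1}\otimes\rho_{D,1}$). The only cosmetic difference is that the paper's simulator additionally samples $d$ and applies $\Z^d$ before measuring, which is redundant since $\Z^d$ commutes with the computational-basis measurement, so your streamlined simulator is equivalent.
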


\begin{proof}[Proof of Correctness]
The state received by the Receiver from the RU Resource is equal to:
\begin{align*}
\cptp U \Z^d\ket{0} = \cptp U \ket{0},
\end{align*}
which is exactly the output of the RSP-SN Resource for input unitary $\cptp U$ and honest Receiver.
\end{proof}

\begin{proof}[Proof of Soundness]
To prove the security of Protocol~\ref{prot:rsp-sn-from-ru} against a malicious Receiver, we make use of Simulator~\ref{sim:rsp-sn-from-ru}.
  
  \begin{simulator}[ht]
    \caption{Against a Malicious Receiver}
    \label{sim:rsp-sn-from-ru}
    \begin{enumerate}	
	\item The Simulator receives a qubit in state $\rho$.
	\item It samples a uniformly random bit $d \sample \bin$ and applies $\Z^d$ to the received qubit
	\item It measures the qubit in the computational basis, let $b$ be the outcomes of the measurement.
	\item It sends $b$ to the RSP-SN Resource and receives a qubit in state $\cptp U \ket{b}$.
	\item It sends this qubit to the Receiver's output interface of the RU Resource.
    \end{enumerate}
  \end{simulator}

  Let $\rho_D$ the internal state of the Distinguisher such that $\rho$ is the state of the subsystem sent as input to the RU Resource interface (i.e.~tracing out the internal subsystem). The first $\Z^d$ operation that is applied to the state $\rho_D$ both in the real and ideal settings yields:
\begin{align*}
\widetilde{\rho_D} = \frac{1}{2}\sum_{d \in \bin} (\Z^{\mathbf{d}} \otimes \Id)(\rho_D) = p_0 \ketbra{0}\otimes\rho_{D, 0} + (1 - p_0) \ketbra{1}\otimes\rho_{D, 1}
\end{align*}
where $0 \leq p_0 \leq 1$. Summing over $d$ performs a partial twirl of the quantum state, which transforms it into a probabilistic mixture of the two basis states. We therefore see that the strategy of the Distinguisher boils down to flipping a coin with probability $p_0$ to obtain $0$ and sending $\ketbra{b}\otimes\rho_{D, b}$.

In the real case, the RU Resource then applies $\cptp U$ to the first qubit and the state in the hands of the Distinguisher at the end is $\cptp U (\ketbra{b})\otimes\rho_{D, b}$. In the ideal case, the Simulator recovers the value of $b$ by measuring the state in the computational basis. This does not modify the state since it is already in the state $\ketbra{b}\otimes\rho_{D, b}$. It then sends $b$ to the RSP-SN Resource, receives $\cptp U (\ketbra{b})$ and replaces the qubit received from the Distinguisher by this newly received qubit. The state is then again $\cptp U (\ketbra{b})\otimes\rho_{D, b}$. The real and ideal cases are identically distributed, which concludes the proof.  
\end{proof}

\begin{remark}
Resource \ref{res:rsp-sn} with $n$ qubits can be constructed from an $n$-qubit RU Resource by applying a dephasing $\Z^{\mathbf{d}}$ for $\mathbf{d} \in \bin^n$. The malicious Receiver can in that case choose any basis state $\ket{\mathbf{b}}$ for $\mathbf{b} \in \bin^n$ as input to the filtered interface of the generalised RSP-SN Resource. The security proofs works as above, with perfect security as a result.
\end{remark}

\section{Collaborative Remote Operations}\label{app:collaborative_remote_operations}

The QSMPC protocol in~\cite{KKLM23asymmetric} relies on a protocol for Collaborative Remote State Preparation for states from a single plane.
Similarly, the (non-verifiable) multi-party blind computation protocol in~\cite{PLLC23multi} relies on a protocol for Collaborative Remote State Rotation to generalise from one to many clients.
In this section, we generalise this result from the Remote State Rotation resource to a class of resources that implement remote quantum operations with group structure.
In particular, this class contains all resources without trusted preparations that appear in this paper, thereby showing how to turn the Remote State Rotation with Dephasing ($\text{RR}_\text{D}$), Multi-qubit Remote Clifford (M-RC), and Remote Unitary (RU) resources into their collaborative counterparts.

\begin{resource}[ht]
  \caption{Remote Operation}
  \label{res:ro}
  \begin{algorithmic}[0]
  	\STATE \textbf{Public Information:} A subgroup $\mathfrak{G} \subseteq \operatorname{U}(n)$ of the unitary group of degree $n \in \mathbb{N}$.
    \STATE \textbf{Inputs:} The Sender inputs the classical description of a unitary $\cptp U \in \mathfrak{G}$. The Receiver inputs a quantum state $\rho$ or dimension $n$.
    \STATE \textbf{Computation by the Resource:} The Resource applies $\cptp U$ to the Receiver's input and sends the state $\cptp U(\rho)$ to the Receiver.
  \end{algorithmic}
\end{resource}

\begin{remark}
Resource~\ref{res:ro} (RO) is a generalised version of all other resources in this work that do not rely on trusted preparations.
	Note that:
	\begin{itemize}
		\item Resource~\ref{res:ru} (RU) can be obtained by setting $\mathfrak{G} = \operatorname{U}(2)$.
		\item Resource~\ref{res:m-rc} (M-RC) on $k$ qubits can be obtained by setting $\mathfrak{G} = \mathbf{C}_k \subseteq \operatorname{U}(2^k)$, where $\mathbf{C}_k$ is the Clifford group on $k$ qubits.
		\item Resource~\ref{res:rrd} ($\text{RR}_\text{D}$) can be obtained by setting $\mathfrak{G} = \langle \cptp X, \cptp Z(\pi/4) \rangle$ for $n=2$, and ``forgetting'' the information whether $\cptp X$ was applied on the Sender's side. This is possible because the Sender is always assumed to be honest.
		\item The Remote State Rotation (RSR) Resource from~\cite{MKAC22qenclave,PLLC23multi} can be obtained by setting $\mathfrak{G} = \langle \cptp Z(\pi/4) \rangle$ for $n=2$.
	\end{itemize}
\end{remark}

The goal of the following protocol in then to construct this resource between a purely classical party called the Orchestrator and a quantum Server from $n$ calls to this resource between $n$ potentially malicious Clients and a malicious Server. So long as one client is honest and has access to a trusted RO Resource, the protocol's security guarantees that no malicious coalition of Clients and Server can recover information about the unitary which the Orchestrator wants to apply. Combined together with the results from the other sections, this allows the purely classical Orchestrator to perform RSP with the Server. Replacing the trusted classical Orchestrator with a classical Secure Multi-Party Protocol run by the $n$ Clients allows them to collaboratively generate states from a given ensemble such that no malicious coalition has any information about the state that has been generated.

\begin{protocol}[ht]
  \caption{Collaborative Remote Operation}
  \label{prot:cro}
  \begin{algorithmic} [0]
\STATE \textbf{Inputs:}
    \begin{itemize}
    		\item The $n$ Clients have no input.
    		\item The Orchestrator inputs the classical description of a unitary $\cptp U \in \mathfrak{G}$.
    		\item The Server inputs a quantum state $\rho$.
    \end{itemize}
    \STATE \textbf{Protocol:}
    \begin{enumerate}
    		\item For $j \in \{1,\ldots,n\}$, Client $j$ samples a unitary $\cptp U_j \sample \mathfrak{G}$ from the Haar measure\footnotemark over $\mathfrak{G}$, and sends the classical description of $\cptp U_j$ to the Orchestrator.
      	\item For $j \in \{1,\ldots,n\}$, the Server and Client $j$ use the RO resource, where the client inputs the classical description of $\cptp U_j$ and the Server inputs $\rho_{j-1}$ and obtains $\rho_j = \cptp U_j(\rho_j)$ as output. By convention, $\rho_0 = \rho$.
      	\item The Orchestrator computes $\cptp U' = \cptp U \cptp U_1^\dagger \cdots \cptp U_n^\dagger$ and sends the classical description of $\cptp U'$ to the Server.
       	\item The Server applies the operation $\cptp U'$ to the quantum state that it received from the last call to the RO resource, and keeps the resulting state as its output.
    \end{enumerate}
  \end{algorithmic}
\end{protocol}
\footnotetext{The group of unitaries is unimodular, and so are all of its subgroups. Hence the left and the right Haar measure conveniently coincide.}

\begin{theorem}\label{lemma:cro}
Protocol~\ref{prot:cro} perfectly constructs Resource~\ref{res:ro} (RO) between the Orchestrator and the Server from one use of Resource~\ref{res:ro} (RO) per Client, against arbitrary adversarial patterns of colluding malicious parties, as long as the Orchestrator and at least one Client are honest.
\end{theorem}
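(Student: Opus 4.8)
\emph{Plan.} I would prove the statement by checking the two conditions of Definition~\ref{def:ac-sec}: correctness when all parties are honest, and soundness against an arbitrary coalition controlling the Server and any set of Clients, under the standing hypothesis that the Orchestrator and at least one Client stay honest. Because I intend to exhibit a simulator whose induced distribution is \emph{identical} (not merely close) to the real one, the unbounded distinguisher of the AC framework gains exactly zero advantage, yielding a \emph{perfect} construction.

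\textbf{Correctness.} With every party honest a telescoping computation suffices. After step~2 the Server holds $\rho_n = \cptp U_n \cdots \cptp U_1\,\rho\,\cptp U_1^\dagger \cdots \cptp U_n^\dagger$, and since $\cptp U' = \cptp U\,\cptp U_1^\dagger \cdots \cptp U_n^\dagger$ the inner products cancel:
\begin{align*}
\cptp U'(\rho_n) = \cptp U\,\cptp U_1^\dagger \cdots \cptp U_n^\dagger\,\cptp U_n \cdots \cptp U_1\,\rho\,\cptp U_1^\dagger\cdots\cptp U_n^\dagger\,\cptp U_n\cdots\cptp U_1\,\cptp U^\dagger = \cptp U(\rho),
\end{align*}
which is precisely the output of Resource~\ref{res:ro} on input $(\cptp U,\rho)$.

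\textbf{Soundness.} I would fix one honest Client $k$ (it exists by hypothesis) whose sampled $\cptp U_k \sample \mathfrak{G}$ serves as a one-time pad. Write $\cptp V_j \in \mathfrak{G}$ for the description Client $j$ hands to the Orchestrator (so $\cptp V_k = \cptp U_k$, while a malicious Client may claim a $\cptp V_j$ differing from the unitary it actually feeds into its own RO call), and set $A = \cptp V_1^\dagger \cdots \cptp V_{k-1}^\dagger$ and $B = \cptp V_{k+1}^\dagger \cdots \cptp V_n^\dagger$, both in $\mathfrak{G}$ as it is a subgroup. Being honest, the Orchestrator emits $\cptp U' = \cptp U\,A\,\cptp U_k^\dagger\,B$. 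The simulator I build plays the Orchestrator's and all Clients' interfaces toward the coalition: it intercepts the claimed $\cptp V_j$, emulates the RO call of every Client other than $k$ by honestly applying the relevant unitary to the state the Server submits (sampling its own Haar unitary for an honest Client $\ne k$, applying the adversary's chosen $\mathfrak{G}$-element for a malicious one), and routes Client $k$'s call and the step-3 message through the single ideal resource as follows. It samples $\cptp W \sample \mathfrak{G}$; when the Server submits a state $\sigma$ to Client $k$'s emulated RO, it feeds $A\,\sigma\,A^\dagger$ into the ideal RO Resource, receives $\cptp U(A\,\sigma\,A^\dagger)$, returns $B\,\cptp W^\dagger\,\cptp U(A\,\sigma\,A^\dagger)\,\cptp W\,B^\dagger$ to the Server, and finally sends $\cptp U' := \cptp W$ as the step-3 message.

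To show this reproduces the real joint distribution, I would reparametrise by $\cptp W := \cptp U\,A\,\cptp U_k^\dagger\,B$, which is a Haar-preserving bijection of $\mathfrak{G}$ (left/right translation composed with inversion, using unimodularity noted in the footnote). Hence $\cptp W$ is Haar, and its pushforward is Haar on all of $\mathfrak{G}$ \emph{independently} of $A, B, \cptp U$ -- exactly the blindness fact that $\cptp U'$ leaks nothing about $\cptp U$ and is uncorrelated with the non-pad RO outputs. Inverting gives $\cptp U_k = B\,\cptp W^\dagger\,\cptp U\,A$, so Client $k$'s real output is
\begin{align*}
\cptp U_k\,\sigma\,\cptp U_k^\dagger = B\,\cptp W^\dagger\,\cptp U\,(A\,\sigma\,A^\dagger)\,\cptp W\,B^\dagger,
\end{align*}
matching the simulator term by term. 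Every other emulated RO output obeys the same honest rule in both worlds, and the malicious Clients' own RO calls -- together with any gap between a malicious Client's claimed $\cptp V_j$ and the unitary it actually applies -- are internal to the coalition and faithfully reproduced. The two experiments thus coincide as random processes, giving a perfect construction.

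\textbf{Main obstacle.} I expect the delicate point to be precisely this single Haar substitution, which must let a simulator \emph{ignorant of $\cptp U$} generate the correct correlation between the pad Client's output and the message $\cptp U'$ from one query to the ideal resource. One must verify that $A, B \in \mathfrak{G}$ (so the pushforward of Haar is again Haar, severing all dependence on $\cptp U$) and that the honesty of the Orchestrator is exactly what fixes the algebraic relation $\cptp U' = \cptp U\,A\,\cptp U_k^\dagger\,B$ tying message to pad. A secondary subtlety is confirming that a malicious Client claiming one element of $\mathfrak{G}$ while applying another cannot break blindness, which holds because the Orchestrator's computation uses only the claimed values while the applied ones act solely on adversary-controlled registers.
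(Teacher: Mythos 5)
Your proof is correct and rests on the same core mechanism as the paper's — a simulator that queries the ideal RO resource exactly once, pads with an independently sampled Haar element of $\mathfrak{G}$, and matches the real view through a Haar-preserving reparametrisation of that pad — but your route is genuinely more general at the point where the paper is weakest. The paper asserts that ``without loss of generality'' the honest Client is the first one, and its simulator then exploits the fact that the pad $\cptp U_1^\dagger$ sits directly adjacent to $\cptp U$ in the correction $\cptp U \cptp U_1^\dagger \cdots \cptp U_n^\dagger$, so the single substitution $\cptp U_1 \mapsto \cptp U_1 \cptp U^\dagger$ already equalises the two transcript tables. For non-abelian $\mathfrak{G}$ this WLOG is not a mere relabelling: with the honest Client at position $k$ the correction is $\cptp U A\, \cptp U_k^\dagger B$ with adversarial factors interposed, and the naive transplant of the paper's simulator (return $\cptp U_k \cptp U(\sigma)$, correct with $A \cptp U_k^\dagger B$) yields a view that differs from the real one by the conjugation defect $\cptp U^\dagger A^\dagger \cptp U A \neq \id$. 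Your simulator repairs exactly this by pre-conjugating the Server's state with $A$ before the ideal query, wrapping the output with $B \cptp W^\dagger(\cdot)\cptp W B^\dagger$, and sending the fresh Haar element $\cptp W$ as the correction; the bijection $\cptp U_k \mapsto \cptp W = \cptp U A \cptp U_k^\dagger B$ (inversion composed with translations, Haar-preserving by the unimodularity the paper's footnote records) is the right generalisation of the paper's substitution. Your explicit separation of the claimed $\cptp V_j$ from the unitaries malicious Clients actually apply also tightens a distinction the paper elides. One caveat you should state explicitly: your simulator needs $A$ and $B$ at the moment of Client $k$'s RO call, i.e., it relies on the protocol's step-1/step-2 ordering in which all claimed descriptions reach the Orchestrator before the RO calls fire. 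Under fully adaptive scheduling, where a malicious Client delays its claim until after Client $k$'s call, a simulator for $k \geq 2$ would apparently have to emit a classical correction encoding $\cptp U$, which a single black-box application of $\cptp U$ cannot provide; the paper's static transcript tables implicitly assume the same ordering, so this is a shared modelling assumption rather than a gap specific to your argument, but it deserves a sentence.
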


\begin{proof}[Proof of Correctness]
    If all participating parties are acting honestly, the quantum state obtained by the Server after the last use of the RO Resource takes the form:
    \begin{align*}
        \cptp U_n \circ \cdots \circ \cptp U_1 ( \rho ).
    \end{align*}
    After the final correction which is applied to this state by the Server, the output becomes:
    \begin{align*}
        \cptp U' \circ \cptp U_n \circ \cdots \circ \cptp U_1 ( \rho ) = \cptp U \circ \cptp U_1^\dagger \circ \cdots \circ \cptp U_n^\dagger \circ \cptp U_n \circ \cdots \circ \cptp U_1 ( \rho) = \cptp U ( \rho ),
    \end{align*}
    which shows that the protocol is correct.
\end{proof}

\begin{proof}[Proof of Soundness]
    Subsequently, we provide the construction of a simulator fit to translate real-world to ideal-world attacks. In the following, we assume the worst case of a colluding malicious coalition of the Server and $n-1$ Clients. Because the protocol and the ideal resource are sufficiently symmetric in the enumeration of the Clients, we can assume without loss of generality that the first Client behaves honestly. The construction of the simulator for this scenario is given as Simulator~\ref{sim:cro}.

  \begin{simulator}[ht]
    \caption{Malicious Server and Clients $2,\ldots,n$}
    \label{sim:cro}
    \begin{enumerate}	
    	\item By impersonating the RO Resource~\ref{res:ro} call between Client $1$ and the Server, it receives a quantum state $\rho$ from the Server.
  		\item The Simulator forwards $\rho$ to the RO Resource~\ref{res:ro} call with the Orchestrator, and receives in return a state $\rho'$.
      	\item It samples the unitary $\cptp U_1 \sample \mathfrak{G}$ from the Haar measure.
      	\item It applies the operation $\cptp U_1$ to $\rho'$ and returns the resulting quantum state to the Server.
      	\item By impersonating the RO Resource call between Client $j$ and the Server, the Simulator receives classical descriptions of unitaries $\cptp U_j \in \mathfrak{G}$ for $j \in \{2,\ldots,n\}$ from the malicious Clients. It receives at each call $j$ a state from the Server and applies $\cptp U_j$ to it before sending it back.
      	\item Finally, the Simulator computes $\cptp  U' = \cptp U_1^\dagger \cdots \cptp U_n^\dagger$ and sends $\cptp U'$ to the Server.
    \end{enumerate}
  \end{simulator}

    It remains to be shown that the views of the Distinguisher in the real world where it has access to the inputs $\cptp U, \rho$ and to the views of all malicious parties, and in the ideal world where it has access to the inputs $\cptp U, \rho$ and to the interfaces to the Simulator are perfectly equal. These two views can be summarised as follows:

\begin{center}
\begin{tabular}{lcc}
\hline
& \textbf{Real World}\hspace*{1em} & \textbf{Ideal World} \\
\hline
\textit{Input unitary} & $\cptp U$ & $\cptp U$ \\
\textit{Input state} & $\rho$ & $\rho$ \\
\textit{Client $1$ Output} & $\cptp U_1 (\rho)$ & $\cptp U_1 \circ \cptp U (\rho)$ \\
\textit{Correction} & $\cptp U \cptp U_1^\dagger \cdots \cptp U_n^\dagger$ & $\cptp U_1^\dagger \cdots \cptp U_n^\dagger$ \\
\hline
\end{tabular}
\end{center}

    Since $\cptp U_1$ is sampled from the Haar measure by the Simulator in the ideal world, by the invariance of the Haar measure, we can substitute it by $\cptp U_1 \cptp U^\dagger$ without changing the view of the Distinguisher. This yields:

\begin{center}
\begin{tabular}{cc}
\hline
\textbf{Real world}\hspace*{1em} & \textbf{Ideal world} \\
\hline
$\cptp U$ & $\cptp U$ \\
$\rho$ & $\rho$ \\ 
$\cptp U_1 (\rho)$ & $\cptp U_1 (\rho)$ \\
$\cptp U \cptp U_1^\dagger \cdots \cptp U_n^\dagger$ & $\cptp U \cptp U_1^\dagger \cdots \cptp U_n^\dagger$ \\
\hline
\end{tabular}
\end{center}

    Clearly, these two distributions are identical. Furthermore, the Simulator acts exactly as the RO Resource when called between malicious Client $j$ and the Server, which together with the views above proves that the views of the Distinguisher in the two worlds are perfectly indistinguishable.
\end{proof}

\end{document}